\documentclass[journal]{IEEEtran}
\usepackage{amsmath,amsfonts}
\usepackage{array}
\usepackage[caption=false,font=normalsize,labelfont=sf,textfont=sf]{subfig}
\usepackage{textcomp}
\usepackage{stfloats}
\usepackage{url}
\usepackage{verbatim}
\usepackage{graphicx}
\usepackage{cite}
\usepackage{xcolor}
\usepackage{tabularx} 
\usepackage[ruled, lined, linesnumbered, commentsnumbered, longend]{algorithm2e}

\usepackage{bm}
\newtheorem{theorem}{Theorem}

\newtheorem{lemma}{Lemma}

\DeclareMathOperator*{\argmin}{arg\,min}

\makeatletter
\let\save@mathaccent\mathaccent
\newcommand*\if@single[3]{%
  \setbox0\hbox{${\mathaccent"0362{#1}}^H$}%
  \setbox2\hbox{${\mathaccent"0362{\kern0pt#1}}^H$}%
  \ifdim\ht0=\ht2 #3\else #2\fi
  }
\newcommand*\rel@kern[1]{\kern#1\dimexpr\macc@kerna}
\newcommand*\widebar[1]{\@ifnextchar^{{\wide@bar{#1}{0}}}{\wide@bar{#1}{1}}}
\newcommand*\wide@bar[2]{\if@single{#1}{\wide@bar@{#1}{#2}{1}}{\wide@bar@{#1}{#2}{2}}}
\newcommand*\wide@bar@[3]{%
  \begingroup
  \def\mathaccent##1##2{%
    \let\mathaccent\save@mathaccent
    \if#32 \let\macc@nucleus\first@char \fi
    \setbox\z@\hbox{$\macc@style{\macc@nucleus}_{}$}%
    \setbox\tw@\hbox{$\macc@style{\macc@nucleus}{}_{}$}%
    \dimen@\wd\tw@
    \advance\dimen@-\wd\z@
    \divide\dimen@ 3
    \@tempdima\wd\tw@
    \advance\@tempdima-\scriptspace
    \divide\@tempdima 10
    \advance\dimen@-\@tempdima
    \ifdim\dimen@>\z@ \dimen@0pt\fi
    \rel@kern{0.6}\kern-\dimen@
    \if#31
      \overline{\rel@kern{-0.6}\kern\dimen@\macc@nucleus\rel@kern{0.4}\kern\dimen@}%
      \advance\dimen@0.4\dimexpr\macc@kerna
      \let\final@kern#2%
      \ifdim\dimen@<\z@ \let\final@kern1\fi
      \if\final@kern1 \kern-\dimen@\fi
    \else
      \overline{\rel@kern{-0.6}\kern\dimen@#1}%
    \fi
  }%
  \macc@depth\@ne
  \let\math@bgroup\@empty \let\math@egroup\macc@set@skewchar
  \mathsurround\z@ \frozen@everymath{\mathgroup\macc@group\relax}%
  \macc@set@skewchar\relax
  \let\mathaccentV\macc@nested@a
  \if#31
    \macc@nested@a\relax111{#1}%
  \else
    \def\gobble@till@marker##1\endmarker{}%
    \futurelet\first@char\gobble@till@marker#1\endmarker
    \ifcat\noexpand\first@char A\else
      \def\first@char{}%
    \fi
    \macc@nested@a\relax111{\first@char}%
  \fi
  \endgroup
}
\makeatother

\begin{document}

\title{\Huge Energy-Efficient Online Scheduling for Wireless Powered Mobile Edge Computing Networks}

\author{
  Xingqiu He,
  Chaoqun You,
  Yuzhi Yang,
  Zihan Chen, 
  Yuhang Shen, \\
  Tony Q. S. Quek,~\IEEEmembership{Fellow,~IEEE},
  Yue Gao,~\IEEEmembership{Fellow,~IEEE}
\thanks{X. He, C. You, and Y. Gao are with the Institute of Space Internet, Fudan University, Shanghai,
China (emails: hexqiu@gmail.com, chaoqunyou@gmail.com, gao.yue@fudan.edu.cn).}
\thanks{Y. Yang is with the 6G Research Center, Khalifa University, 127788 Abu Dhabi, UAE (e-mail: yuzhi.yang@ku.ac.ae).}
\thanks{Z. Chen is with the Singapore University of Technology and Design, Singapore 487372 (e-mail: zihan\_chen@mymail.sutd.edu.sg).}
\thanks{Y. Shen is with the Department of Computer and Information Engineering, Khalifa University, Abu Dhabi, UAE (e-mail: 100066921@ku.ac.ae).}
\thanks{T. Q. S. Quek is with the Singapore University of Technology and Design,
Singapore 487372, and also with the Yonsei Frontier Lab, Yonsei University,
South Korea (e-mail: tonyquek@sutd.edu.sg).}
}



\maketitle

\begin{abstract}
Wireless Powered Mobile Edge Computing (WPMEC) integrates mobile edge computing (MEC) with wireless power transfer (WPT) 
to simultaneously extend the operational lifetime and enhance the computational capability of wireless devices (WDs). 
In WPMEC systems, WPT and computation offloading compete for limited wireless resources, which makes their joint scheduling particularly challenging.
In this paper, we investigate the energy-efficient online scheduling problem for WPMEC networks with multiple WDs and multiple access points (APs). 
Based on Lyapunov optimization, we develop an online optimization framework 
that transforms the original stochastic problem into deterministic per-slot optimization problems. 
To reduce computational complexity, we introduce the concept of marginal energy efficiency and derive an associated optimality condition, 
based on which a relax–then–adjust approach is proposed to efficiently obtain feasible solutions. 
For the resulting non-convex computation offloading subproblem, we analyze the structural properties of its 
optimal solution and transform it into an assignment problem that can be solved efficiently. 
We further provide theoretical performance guarantees for both the per-slot and long-term solution,
establishing a fundamental trade-off between latency and energy consumption. 
To improve practical performance, additional mechanisms are introduced to balance the magnitudes of different queues and reduce latency without increasing energy consumption. 
Extensive simulation results demonstrate the effectiveness and robustness of the proposed algorithm under various system settings.
\end{abstract}

\begin{IEEEkeywords}
wireless power transfer, mobile edge computing, Lyapunov optimization, online scheduling, energy-efficient
\end{IEEEkeywords}

\section{Introduction}
With the rapid development in recent years, the Internet of Things (IoT) technology has played an important role
in the intelligent and autonomous control of many industrial and commercial systems, such as smart grid and smart cities \cite{al2015internet}.
Due to the stringent size constraint and production cost consideration,
the ubiquitously deployed IoT Wireless Devices (WDs) usually have restricted computation capability and finite battery capacity,
which severely degrades the Quality of Service (QoS) experienced by users.
To handle the two fundamental performance limitations, Wireless Powered Mobile Edge Computing (WPMEC) has been proposed as a novel paradigm
that combines the advantages of Wireless Power Transfer (WPT) and Mobile Edge Computing (MEC) \cite{wang2023wireless}.

As a promising approach that provides sustainable energy supply for WDs, WPT utilizes dedicated energy transmitters to broadcast radio frequency (RF) signals \cite{zhang2018wireless}.
The received RF signals can be converted to electricity by energy harvesting circuits and used to charge WDs continuously.
On the other hand, MEC is a newly emerged computing paradigm that enables WDs to offload their computation tasks to nearby edge servers located at the edge of radio access networks \cite{mao2017survey}.
As the integration of both techniques, WPMEC charges WDs with WPT and alleviates WDs' computation workloads with MEC.
As a result, the WDs' device lifetime and computation capacity are simultaneously improved, which leads to significantly enhanced user experiences.

While WPMEC provides significant improvement, it results in a much more complicated scheduling problem.
First, WDs are merely powered by WPT, hence the energy-consuming operations of WDs, such as local computing and computation offloading,
are strictly restricted by the amount of harvested energy.
Second, due to the half-duplex of access points (APs) or WDs,
the WPT process and the computation offloading process have to be separated in time horizon.
Therefore, the WPT and computation offloading process are deeply interwined and requires joint optimization.

In this paper, we investigate the scheduling of WPT and computation offloading in WPMEC networks.
Different from most existing works that focus on single AP or a single time slot, 
we consider a general multi-AP-multi-WD scenario and formulate our problem under an online setting,
where the surplus energy from previous time slots can be stored in the battery and utilized later.
This brings new challenges in algorithm design as 
(i) the wireless resource allocation among WDs and APs are substantially more complicated and 
(ii) decision variables in different time slots are deeply coupled through both energy and data queue dynamics.

Our objective is to minimize the long-term energy consumption and latency by jointly optimizing the resource allocation in WPT 
and computation offloading stages.
To guarantee the energy-harvesting efficiency, only one AP is allowed to broadcast RF energy at the same time.
By choosing different APs for WPT in turn, the proposed approach can alleviate the ``double near-far'' effect \cite{ji2018energy}---which 
occurs because a farther device harvests less energy from the AP but spends more power to communicate in longer distances---because 
each WD has a chance to harvest energy from a closer AP.
Our main contributions are summarized as follows.
\begin{itemize}
    \item We formulate the energy-efficient online scheduling problem for WPMEC networks with multiple APs and WDs.
      To reduce the computational complexity of the proposed algorithm, we introduce the concept of \emph{marginal energy efficiency} 
      and establish an associated optimality condition. 
      Leveraging this condition, we propose a relax-then-adjust framework, in which the original problem is first relaxed 
      and decomposed into several independent subproblems. 
      After solving each subproblem, the resulting decision variables are adjusted according to the optimality condition to construct a feasible solution.
    \item To solve the non-convex computation offloading subproblem, we analyze the structure of its optimal solution and 
      transform it to an assignment problem, which can be efficiently solved using the Hungarian algorithm. 
      We further prove that the optimality gap between the proposed solution and the global optimum of the single-slot problem is upper bounded by a constant. 
      Building on this result, we establish an $\mathcal{O}(V)$--$\mathcal{O}(1/V)$ trade-off between latency and energy consumption in the online setting.
    \item To further improve practical performance, we introduce a virtual battery capacity together with scalar weighting coefficients to 
      balance the large disparity between the magnitudes of energy queues and data queues. 
      In addition, we incorporate place-holder backlogs to reduce latency without increasing energy consumption. 
      An estimation-based mechanism is proposed to automatically adapt the length of the place-holder backlogs, 
      eliminating the need for prior knowledge of the optimal Lagrange multipliers.
    \item Extensive simulations are conducted to validate the effectiveness of the proposed algorithm. 
      Numerical results demonstrate that the long-term energy consumption and latency are significantly reduced under a wide range of system settings. 
      All source codes are made publicly available\footnote{https://github.com/XingqiuHe/wpmec} to facilitate reproducibility and enable further research based on this work.
\end{itemize}

The rest of the paper is organized as follows.
In Section \ref{section:related work}, we review related works.
In Section \ref{section:system model}, we introduce the system model and formulate the online scheduling problem.
An online algorithm that jointly optimizes the WPT and computation scheduling is proposed in Section \ref{section:algorithm design}.
Two effective approaches are proposed in Section \ref{section:improvement} to further improve the performance of our algorithm.
The simulations and related numerical results are presented in Section \ref{section:simulation} and we conclude our paper in Section \ref{section:conclusion}.

\section{Related Work} \label{section:related work}
Recent advances in WPT technology \cite{krikidis2014simultaneous, zhang2019wiress, clerckx2021wireless} have enabled the design of WPMEC networks. 
Early studies such as \cite{mao2016dynamic} and \cite{you2016energy} investigated scheduling for a single WD, 
jointly considering energy-harvesting and computation-offloading operations. 
Subsequent research extended these models to multi-WD settings. 
For example, \cite{hu2018wireless, ji2018energy} examined WPMEC networks with near–far user asymmetries and 
addressed the ``double near–far'' effect. 
Other works focused on system-level performance: 
\cite{bi2018computation, zhou2018computation} sought to maximize the aggregate computation rate across WDs, 
with \cite{zhou2018computation} employing an unmanned aerial vehicle (UAV) as a mobile energy transmitter. 
More recently, \cite{zheng2024minimization} considered minimizing task completion time in WPMEC networks where tasks can be further offloaded to a cloud server.

Most of the above studies adopt binary offloading, wherein each task is indivisible and is either executed locally on the WD or fully offloaded to the AP. 
Beyond binary offloading, partial offloading has also been explored, allowing tasks to be partitioned and processed concurrently at the WD and the AP. 
Specifically, \cite{wang2017joint} proposed a unified multi-WD MEC–WPT framework that jointly optimizes energy beamforming, 
computation, and offloading to minimize AP energy consumption under latency constraints. 
In \cite{zeng2023joint}, a cooperative scheme based on non-orthogonal multiple access (NOMA) was developed to mitigate the double 
near–far effect via joint communication–computation cooperation.
The authors in \cite{zhou2020computation, chen2024computation, wu2022residual} considered both binary and partial offloading,
focusing on computation efficiency maximization under the max-min fairness criterion \cite{zhou2020computation}, 
minimizing computation completion time under dynamic channel conditions \cite{chen2024computation},
and maximizing redidual energy under a non-linear energy harvesting model \cite{wu2022residual}.

A common limitation of these works is their ``one-shot'' optimization perspective, whereby scheduling is optimized for a single time 
interval without accounting for temporal coupling. 
In practice, WPMEC systems operate sustainably, and control actions across time are interdependent.
For example, the energy harvested previously can be stored in the battery for future use. 
Motivated by this, many recent works try to solve the scheduling problem in online settings.
For a single WD and single AP, \cite{wang2020optimal} minimized energy consumption by jointly optimizing the WPT power and offloading decisions.
The model was extended in \cite{wu2019online} to include multiple WDs with a long-term throughput objective. 
Using Lyapunov optimization, \cite{mao2019energy} derived an online algorithm that characterizes a theoretical energy–delay trade-off. 
To support real-time control under fast fading environments, \cite{huang2019deep} proposed a deep reinforcement learning approach that yields 
near-optimal solutions in large-scale WPMEC networks. 
In addition, \cite{wang2020real} developed an offline-inspired online resource-allocation framework that jointly optimizes 
energy beamforming, task execution, and offloading under causal task and channel information to reduce system energy consumption.

It is worth noting that the above works predominantly consider single-AP scenarios. 
To the best of our knowledge, only \cite{zhu2020computation} and \cite{deng2022wireless} have investigated WPMEC networks with multiple APs. 
Specifically, in \cite{zhu2020computation}, an approximation algorithm was developed to maximize the proportion of tasks completed within their deadlines.
However, the study is limited to a single time slot and assumes fixed WD transmission power. 
The work in \cite{deng2022wireless} designed a Lyapunov-based dynamic control algorithm for WPMEC systems with both task and energy queue dynamics, 
achieving near-optimal throughput with reduced queue backlog. 
Nevertheless, \cite{deng2022wireless} makes several simplifying assumptions: it allows simultaneous WPT and offloading processes and presumes 
that WDs lack local computation capabilities, requiring all tasks to be offloaded to APs.
In contrast, this paper studies online scheduling in multi-AP, multi-WD WPMEC networks where tasks can 
be flexibly executed either locally or at APs, thereby achieving a more realistic and general system model.

\section{System Model and Problem Formulation} \label{section:system model}
In this section, we present the considered system model and 
formulate the energy-efficient online scheduling problem, where the control decisions
for WPT, local computing, and computation offloading are jointly optimized.
The major notations used in this paper are summarized in Table \ref{tab:notation}.

\begin{table}[!t]
\renewcommand{\arraystretch}{1.2}
\caption{Major Notations}
\label{tab:notation}
\centering
\begin{tabularx}{0.99\linewidth}{l l}
\hline
\textbf{Notation} & \textbf{Description}\\
\hline
$N, \mathcal{N}$ & Number of WDs and corresponding index set. \\
$M, \mathcal{M}$ & Number of APs and corresponding index set. \\
$T, \mathcal{T}$ & Number of time slots and corresponding index set. \\
$\Delta t$ & Duration of each time slot. \\
$h^D_{ij}(t), h^U_{ij}(t)$ & Downlink and uplink channel gain between WD $i$ \\
                           & and AP $j$. \\
$a^T_j(t)$ & Whether AP $j$ broadcast RF energy in slot $t$. \\
$a^O_{ij}(t)$ & Whether WD $i$ is communicating with AP $j$ in slot $t$. \\
$P^T_j(t), \tau^T_j(t)$ & Transmission power and time of AP $j$ for WPT. \\
$P^O_i(t), \tau^O_i(t)$ & Transmission power and offloading time of WD $i$ \\
                        & for computation offloading. \\
$E^T_j(t)$ & Energy consumption of AP $j$ due to WPT. \\
$E^C_j(t)$ & Energy consumption for the computation at AP $j$. \\
$E^O_i(t)$ & Energy consumption for offloading of WD $i$. \\
$E^H_i(t)$ & Energy harvested by WD $i$ during time slot $t$. \\
$E^L_i(t), E^O_i(t)$ & Energy consumption for local computing and \\
                     & computation offloading. \\
$D^L_i(t), D^O_i(t)$ & Amount of locally processed and offloaded data. \\
$B^{max}_i, B_i(t)$ & Maximum and remaining power of WD $i$. \\
$A_i(t)$ & Arrived computation data of WD $i$ in slot $t$. \\
$Q_i(t)$ & Length of queueing data at WD $i$. \\
$f_i(t)$ & CPU frequency of WD $i$ in slot $t$. \\
\hline
\end{tabularx}
\end{table}

\subsection{Network Model}
As shown in Fig. \ref{fig:wpmec}, we consider a WPMEC network consisting of
$N$ WDs and $M$ APs, where each AP is integrated with an RF energy transmitter and a MEC server.
The set of WDs and APs are denoted by $\mathcal{N}$ and $\mathcal{M}$, respectively.
APs are assumed to have a stable power supply and broadcast RF to charge WDs.
The energy harvested by each WD is stored in a rechargeable battery, which is used to power its computational and communication operations.
In this work, we assume APs are operated at orthogonal frequency bands,
hence the WPT and wireless communications of \emph{different} APs can be operated simultaneously without mutual interference,
but the WPT and wireless communications (for offloading) of the \emph{same} AP
cannot be performed simultaneously and the TDMA protocol is applied to avoid mutual interference.

The time horizon is divided into $T$ slots with equal length $\Delta t$,
where each slot consists of four phases, i.e., WPT, computation offloading, edge computing, and result downloading,
as illustrated in Fig. \ref{fig:wpmec}.
The set of time slots is dentoed by $\mathcal{T}$.
Note that we have assumed WDs have simultaneous wireless information and power transfer (SWIPT) abilities \cite{perera2017simultaneous, clerckx2021wireless},
e.g., WDs have separate receivers for information transmission and energy harvesting.
Therefore, $\mathit{WD}_1$ and $\mathit{WD}_4$ can communicate with $\mathit{AP}_1$ and $\mathit{AP}_3$ while harvesting energy from $\mathit{AP}_2$.
Our model and algorithm can be easily extended to accommodate WDs without SWIPT capabilities,
as will be discussed in Section \ref{subsection:problem_formulation}.

Given that MEC servers possess significantly higher computational capabilities than WDs, 
and that the results of edge computation typically have small data sizes, the time required for task execution at the edge and result downloading is negligible
\cite{zhu2020computation, wang2017joint, zhou2020computation, wu2019online, mao2019energy}.
Consequently, this study focuses exclusively on the durations of the first two operational phases in the system model.
Let $h^{D}_{ij}(t)$ and $h^{U}_{ij}(t)$ denote the downlink and uplink channel gains between WD $i$ and AP $j$ during time slot $t$. 
If WD $i$ lies outside the communication range of AP $j$, the corresponding channel gain is assumed to be zero. 
Following the assumptions in \cite{wang2020optimal, wu2019online, mao2019energy}, 
all wireless channels are modeled as quasi-static flat-fading, i.e., the channel state remains constant within each time slot but may vary independently across different slots.
Our objective is to minimize the total energy consumption for processing the computational tasks of all WDs.

\begin{figure}[!t]
    \centering
    \includegraphics[width=0.45\textwidth]{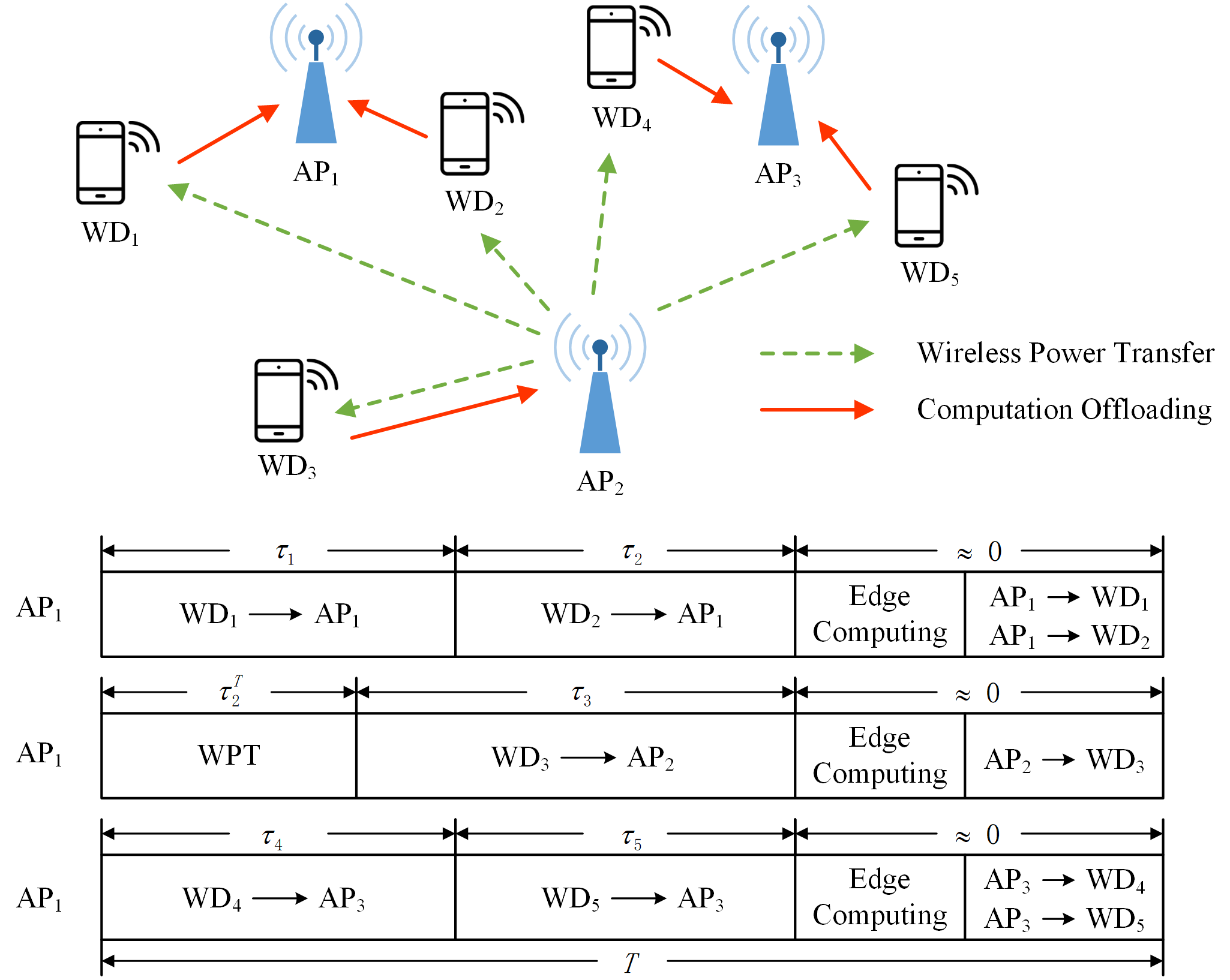}
    \caption{An illustration of the system model and time allocation.}
    \label{fig:wpmec}
\end{figure}

\subsection{Wireless Power Transfer and Energy Harvesting Model}
In scenarios where WDs harvest RF energy from multiple APs simultaneously, they must operate over a broader spectral bandwidth,
which typically exhibits significantly lower conversion efficiency due to circuit design limitations and signal dispersion effects \cite{kwiatkowski2022broadband}. 
To enhance the efficiency of RF energy transfer, we therefore assume that only one AP is selected to broadcast RF energy during each time slot. 
By cyclically activating different APs over time, this strategy also mitigates the ``double near–far'' problem, 
as it ensures that each WD has periodic opportunities to harvest energy from a geographically closer AP.

We use binary variable $a^T_j(t) \in \{0, 1\}$ to indicate whether AP $j$ is selected to broadcast RF energy in time slot $t$.
The corresponding transmission power and time are denoted by $P^T_j(t)$ and $\tau^T_j(t)$, respectively. 
The following constraint ensures that at most one AP conducts energy transmission per time slot:
\begin{equation}
  \sum_{j=1}^{M} a^T_j(t) \leq 1, \quad \forall i\in\mathcal{N}, \forall t\in \mathcal{T}. \label{cons:a^T_j}
\end{equation}
Accordingly, the energy consumption of AP $j$ for WPT during slot $t$ is given by:
\begin{equation}
    E^T_j(t) = a^T_j(t) P^T_j(t) \tau^T_j(t). \label{eq:wpt_energy}
\end{equation}

Following the assumptions in~\cite{mao2019energy, lu2014wireless}, we neglect the energy harvested from ambient noise and adopt a linear energy harvesting model. 
The amount of energy harvested by WD $i$ in time slot $t$ is expressed as:
\begin{equation}
    E^H_i(t) = \sum_{j=1}^{M} \mu_i a^T_j(t) P^T_j(t) h^D_{ij}(t) \tau^T_j(t) \label{eq:energy_harvest}
\end{equation}
where $\mu_j \in (0,1)$ denotes the energy conversion efficiency of WD $i$.
The harvested energy is stored in the WD’s rechargeable battery.
Let $B_i^{\max}$ denote the battery capacity and $B_i(t)$ represent the available energy of WD $i$ at the beginning of slot $t$.
The update rule of $B_i(t)$ is:
\begin{equation*}
    B_i(t+1) = \min \left[ B_i(t) - E^L_i(t) - E^O_i(t) + E^H_i(t), B^{max}_i \right]
\end{equation*}
where $E^L_i(t)$ and $E^O_i(t)$ are the energy consumption for local computing and computation offloading, respectively.
These terms will be further elaborated in the next subsection.
Due to the energy causality constraint, the energy consumption of each WD in any time slot cannot exceed its available battery energy, which leads to:
\begin{equation}
  E^L_i(t) + E^O_i(t) \leq B_i(t), \quad \forall i\in\mathcal{N}, \forall t\in \mathcal{T}. \label{cons:energy_causality}
\end{equation}

\subsection{Computation Scheduling Model}
In each time slot $t$, let $A_i(t)$ denote the amount of computation data arriving at WD $i$. 
Without loss of generality, $A_i(t)$ is assumed to be an independent and identically distributed (i.i.d.) random variable with an 
average arrival rate of $\mathbb{E}[A_i(t)] = \lambda_i$. 
The newly arrived tasks can either be processed locally or offloaded to one of the APs for edge computation.

Let $Q_i(t)$ represent the length of the data queue maintained at WD $i$. The queue dynamics evolve according to
\begin{equation}
  Q_i(t+1) = \max \left[Q_i(t) - D^L_i(t) - D^O_i(t), 0 \right] + A_i(t),
  \label{eq:Q_update}
\end{equation}
where $D^L_i(t)$ and $D^O_i(t)$ denote the amount of data processed locally and offloaded to the APs, respectively.
It is important to note that newly arrived data $A_i(t)$ becomes available only at the end of the current slot and can be processed starting from the next slot. 

\subsubsection{Local Computing}
Following prior studies, we assume that local computing can be performed simultaneously with energy harvesting and computation offloading.
Each WD $i$ employs the Dynamic Voltage and Frequency Scaling (DVFS) technique~\cite{rabaey2003digital} to dynamically adjust its 
CPU operating frequency $f_i(t)$ for energy-efficient computing. 
Let $\tau^L_i(t)$ be the duration of local computing for WD $i$ during time slot $t$. Then the amount of data processed locally can be expressed as
\begin{equation}
    D^L_i(t) = \frac{f_i(t)\tau^L_i(t)}{\phi_i} \label{eq:def_D^L}
\end{equation}
where $\phi_i$ is the number of CPU cycles required to process one bit of computation data.

According to \cite{hu2018wireless}, the energy consumption associated with local computing at WD $i$ is given by
\begin{equation}
    E^L_i(t) = \kappa_i f^3_i(t) \tau^L_i(t) 
    \label{eq:def_E^L}
\end{equation}
where $\kappa_i$ denotes the energy efficiency coefficient of the processor equipped on WD $i$.
By substituting \eqref{eq:def_E^L} into \eqref{eq:def_D^L} we can obtain
\begin{equation*}
    D^L_i(t) = \sqrt[3]{\frac{E^L_i(t){\tau^L_i(t)}^2}{\kappa_i}} \frac{1}{\phi_i}
\end{equation*}
which indicates that $D^L_i(t)$ is a monotonically increasing function of the computation duration $\tau^L_i(t)$ for a given energy budget $E^L_i(t)$.
Therefore, to maximize energy efficiency, it is optimal to allocate the entire slot to local processing, i.e.,
\begin{equation*}
  \tau^L_i(t) = \Delta t, \quad \forall i\in\mathcal{N}, \forall t\in\mathcal{T}.
\end{equation*}

\subsubsection{Computation Offloading}
Let $\tau^O_i(t)$ and $P^O_i(t)$ denote the offloading duration and transmission power of WD $i$ during time slot $t$, respectively. 
Each WD is assumed to communicate with at most one AP within a single time slot. 
We introduce a binary variable $a^O_{ij}(t) \in \{0, 1\}$ to indicate whether 
WD $i$ is associated with AP $j$ for data offloading in slot $t$,
then $a^O_{ij}(t)$ should satisfy
\begin{equation}
  \sum_{j=1}^{M} a^O_{ij}(t) \leq 1, \quad \forall i\in\mathcal{N}, \forall t\in \mathcal{T}. \label{cons:a_ij}
\end{equation}
The amount of computation data offloaded from WD $i$ to the APs in slot $t$ is given by
\begin{equation}
    D^O_i(t) = \sum_{j=1}^{M} \frac{a^O_{ij}(t)B\tau^O_{i}(t)}{v_i} \log_2 \left( 1 + \frac{P^O_i(t) h^U_{ij}(t)}{\sigma_j^2} \right)
    \label{eq:offloaded_data}
\end{equation}
where $B$ denotes the spectral bandwidth, $\sigma_j^2$ represents the noise power at AP $j$, 
and $v_i > 1$ accounts for the communication overhead caused by encryption, 
packet headers, and protocol signaling~\cite{bi2018computation, zhou2020computation}.  
The energy consumption incurred by WD $i$ for computation offloading is expressed as
\begin{equation}
    E^O_i(t) = P^O_i(t) \tau^O_i(t).
    \label{eq:def_offloading_energy}
\end{equation}
Similar to~\cite{wang2017joint}, 
we assume the computation energy consumption at AP $j$ is proportional to the total volume of data received from WDs,
which is given by
\begin{equation*}
    E^C_j(t) = \sum_{i=1}^{N} \eta a^O_{ij}(t) \phi_i D^O_i(t)
\end{equation*}
where $\eta$ represents the energy consumption per CPU cycle of the MEC server at AP $j$.

\subsection{Problem Formulation} \label{subsection:problem_formulation}
To reduce the overall carbon footprint, this paper aims to minimize the total energy consumption incurred in 
processing the computation tasks generated by the WDs. 
This objective is achieved through the joint optimization of WPT, local computing, and computation offloading decisions. 
A similar optimization objective has been investigated in~\cite{ji2018energy, mao2019energy, you2016energy}. 
Based on the system model developed in the preceding subsections, the problem can be formulated as follows:
\begin{align}
  \min_{\bm{a, \tau, P, f}}\quad & \lim_{T\to\infty} \frac{1}{T} \sum_{t=0}^{T-1} \sum_{j=1}^{M} \mathbb{E}\left\{ E^T_j(t) + E^C_j(t) \right\} \label{prob} \\
  s.t.\quad & \eqref{cons:a^T_j}, \eqref{cons:energy_causality}, \eqref{cons:a_ij} \quad \tag{\ref{prob}{a}} \label{cons:before_constraints} \\
            & a^T_j(t)\tau^T_j(t) + \sum_{i=1}^{N} a^O_{ij}(t)\tau^O_i(t) \leq \Delta t, \notag \\
            & \qquad \qquad \qquad \qquad \qquad \ \ \forall j\in\mathcal{M}, \forall t\in\mathcal{T} \tag{\ref{prob}{b}} \label{cons:time_allocation} \\
            & a^T_j(t), a^O_{ij}(t) \in \{ 0, 1\},  \notag \\
            & \qquad \qquad \qquad \quad \forall i\in\mathcal{N}, \forall j\in\mathcal{M}, \forall t\in\mathcal{T} \tag{\ref{prob}{c}} \label{cons:var_a} \\
            & 0 \leq \tau^T_j(t) \leq \Delta t, 0 \leq P^T_j(t) \leq P^{T, max}_j, \notag \\
            & \qquad \qquad \qquad \qquad \qquad \ \forall j\in\mathcal{M}, \forall t\in\mathcal{T} \tag{\ref{prob}{d}} \label{cons:var_wpt} \\
            & 0 \leq \tau^O_{i}(t) \leq \Delta t, 0 \leq P^O_i(t) \leq P^{O,max}_i, \notag \\ 
            & \qquad \qquad \qquad \qquad \qquad \ \ \forall i\in\mathcal{N}, \forall t\in\mathcal{T} \tag{\ref{prob}{e}} \label{cons:var_offloading} \\
            & 0 \leq f_i(t) \leq f^{max}_i, \quad \forall i\in\mathcal{N}, \forall t\in\mathcal{T} \tag{\ref{prob}{f}} \label{cons:var_local} \\
            & \widebar{Q}_i < \infty, \quad \forall i\in\mathcal{N} \tag{\ref{prob}{g}} \label{cons:stability}
\end{align}
where $\bm{a, \tau, P, f}$ represents the sets of decision variables, 
$P^{T,max}_j$ denotes the maximum WPT transmission power of AP $j$,
$P^{O,max}_i$ and $f^{max}_i$ are the maximum offloading power and computational capability of WD $i$,
$\widebar{Q}_i = \lim_{T\to\infty} \frac{1}{T} \sum_{t=0}^{T-1} Q_i(t)$ is the time average length of $Q_i(t)$.

The objective reflects the long-term average energy consumption of all APs,
which also reflects the total system-wide energy consumption since WDs are powered by energy harvested from the APs.
Constraint~\eqref{cons:before_constraints} incorporates the system constraints defined in the previous subsections.
Constraint~\eqref{cons:time_allocation} enforces feasible time allocation under the TDMA protocol.
Constraints~\eqref{cons:var_a}-\eqref{cons:var_local} specify the allowable domains of the decision variables.
Finally, constraint~\eqref{cons:stability} 
guarantees the stability of the task queues at all WDs, ensuring sustainable network operation over time.

For scenarios in which WDs do not possess SWIPT capabilities, wireless communication cannot be conducted during WPT. 
Consequently, the computation offloading time of \emph{all} APs, rather than only the AP performing energy transfer, 
must exclude the duration of WPT. Accordingly, constraint~\eqref{cons:time_allocation} can be replaced by
\begin{equation*}
\sum_{j'=1}^{M} a^T_{j'} \tau^T_{j'}(t) + \sum_{i=1}^{N} a^O_{ij}(t)\tau^O_i(t) \leq T, \quad \forall j,\ \forall t.
\end{equation*}
It is straightforward to verify that the algorithm proposed in the next section can be readily adapted to this setting.

\section{An Online Algorithm for Wireless Power Transfer and Computation Scheduling} \label{section:algorithm design}
In this section, we develop an online algorithm for the joint optimization of WPT and computation scheduling
based on the Lyapunov optimization framework. 
To efficiently address the inherent coupling among decision variables, 
a relax-then-adjust strategy is introduced to decompose the original 
optimization problem into a series of more tractable subproblems. 
This decomposition significantly reduces computational complexity while maintaining 
near-optimal performance in dynamic network environments.

\subsection{Online Scheduling with Lyapunov Optimization}
Lyapunov optimization provides an effective online optimization framework that is particularly well-suited for queueing systems. 
It enables dynamic decision-making at each time slot to jointly minimize both the system objective and the queue lengths. 
In the proposed system, two types of queues are considered: the data queue $Q_i(t)$ and the energy queue $B_i(t)$. 
However, unlike typical queue stabilization problems, our objective is to maintain a sufficient level of stored energy, 
i.e., to \emph{increase} rather than reduce the energy queue.

To facilitate the application of Lyapunov optimization, we introduce an auxiliary battery deficit queue $B^-_i(t)$, defined as
\begin{equation*}
    B^-_i(t) = B^{max}_i - B_i(t),
\end{equation*}
which represents the energy shortfall relative to the maximum battery capacity. 
According to the energy causality constraint~\eqref{cons:energy_causality}, 
the total energy consumed by WD $i$ in any time slot cannot exceed the energy currently stored in its battery. 
Consequently, $B^-_i(t)$ always lies within the interval $[0, B^{max}_i]$.

Let $\bm{\Theta}(t) = [ \bm{Q}(t), \bm{B}^-(t) ]$ be the combined queue vector, where $\bm{Q}(t) = (Q_1(t), Q_2(t), \dots, Q_N(t))$ and
$\bm{B}^-(t) = (B^-_1(t), B^-_2(t), \dots, B^-_N(t))$.
Following standard Lyapunov optimization principles,
we define the quadratic Lyapunov function as
\begin{equation}
    L(t) = \frac{1}{2} \sum_{i=1}^{N} \left[ Q_i(t)^2 + B^-_i(t)^2 \right].
    \label{eq:lyapunov_function}
\end{equation}
To characterize the change of queue lengths, the one-step conditional Lyapunov drift is defined as
\begin{equation*}
    \Delta L(t) = \mathbb{E} \left\{ L(t+1) - L(t) | \bm{\Theta}(t) \right\}.
\end{equation*}
To jointly balance queue stability and energy efficiency, we combine the Lyapunov drift with the 
system energy consumption objective, leading to the drift-plus-penalty function:
\begin{equation*}
    \Delta_V L(t) = \Delta L(t) + V\mathbb{E} \left\{ \sum_{j=1}^{M} \left( E^T_j(t) + E^C_j(t) \right) | \bm{\Theta}(t) \right\}
\end{equation*}
where $V > 0$ is a tunable control parameter that determines the trade-off between system energy consumption and queue backlog.
The following lemma provides an upper bound of $\Delta_V L(t)$:
\begin{lemma}
    For any time slot $t$ and queue state $\bm{\Theta}(t)$, the drift-plus-penalty term is upper-bounded by
    \begin{align}
        \Delta_V L(t) \leq & C_1 - \sum_{i=1}^{N} Q_i(t) \mathbb{E} \left\{ D^L_i(t) + D^O_i(t) - A_i(t) | \bm{\Theta}(t) \right\} \notag \\
        &- \sum_{i=1}^{N} B^-_i(t) \mathbb{E} \left\{ E^H_i(t) - E^L_i(t) - E^O_i(t) | \bm{\Theta}(t) \right\} \notag \\
        &+ V\mathbb{E} \left\{ \sum_{j=1}^{M} \left( E^T_j(t) + E^C_j(t) \right) | \bm{\Theta}(t) \right\} \label{eq:dpp_bound}
    \end{align}
    where $C_1$ is a finite positive constant.
    \label{lemma:dpp_bound}
\end{lemma}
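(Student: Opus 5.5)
The plan is the standard Lyapunov drift argument. First I write each queue update in a ``$\max[\cdot,0]$'' form, then square it, and then take conditional expectations. The data queue already has this form by \eqref{eq:Q_update}. For the battery deficit queue I would substitute $B_i(t)=B^{max}_i-B^-_i(t)$ into the battery update. Since $\min[x,B^{max}_i]$ becomes $\max[B^{max}_i-x,0]$ after subtracting from $B^{max}_i$, this gives
\begin{equation*}
B^-_i(t+1)=\max\left[B^-_i(t)-\left(E^H_i(t)-E^L_i(t)-E^O_i(t)\right),0\right].
\end{equation*}
This is a queue with ``service'' $E^H_i(t)$ and ``arrival'' $E^L_i(t)+E^O_i(t)$. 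By \eqref{cons:energy_causality} we have $E^L_i(t)+E^O_i(t)\le B^{max}_i$, so the sign of the increment may be arbitrary but its magnitude stays bounded.

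Next I would use the elementary inequalities $(\max[x-b,0]+a)^2\le x^2+a^2+b^2+2x(a-b)$ and $(\max[x-c,0])^2\le x^2+c^2-2xc$, valid for $x,a,b\ge0$ and real $c$. The first applies to $Q_i$ with $x=Q_i(t)$, $b=D^L_i(t)+D^O_i(t)$, $a=A_i(t)$. The second applies to $B^-_i$ with $x=B^-_i(t)\ge 0$ and $c=E^H_i(t)-E^L_i(t)-E^O_i(t)$; it holds because for $x\ge 0$, $\max[x-c,0]^2\le (x-c)^2$ whenever $x-c\ge0$, and the left side is $0$ otherwise. Summing over $i$, halving, and taking expectation conditioned on $\bm{\Theta}(t)$ yields $\Delta L(t)$ bounded by
\begin{equation*}
\tfrac12\sum_i \mathbb{E}\{A_i^2+(D^L_i+D^O_i)^2+(E^H_i-E^L_i-E^O_i)^2\mid\bm{\Theta}(t)\}
\end{equation*}
plus exactly the two cross terms in \eqref{eq:dpp_bound}. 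Adding the penalty $V\mathbb{E}\{\cdot\}$ to both sides then gives the claimed form, once the first sum is bounded by a constant $C_1$.

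The remaining step, and the only real obstacle, is to show that these second moments are uniformly bounded independently of $t$ and of the control policy. I would bound each quantity separately:
\begin{itemize}
\item $D^L_i\le f^{max}_i\Delta t/\phi_i$ by \eqref{cons:var_local} and $\tau^L_i=\Delta t$.
\item $D^O_i\le \frac{B\Delta t}{v_i}\log_2(1+P^{O,max}_i h^{U,max}/\sigma_j^2)$.
\item $E^H_i\le \mu_i P^{T,max}h^{D,max}\Delta t$, using that at most one AP transmits.
\item $E^L_i+E^O_i\le B^{max}_i$.
\item $\mathbb{E}[A_i^2]<\infty$.
\end{itemize}
These bounds require that channel gains be bounded (or at least have the relevant finite moments) and that $A_i(t)$ have a finite second moment. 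The paper leaves both implicit, so I would state them as standing assumptions. With them, $C_1$ is half the sum over $i$ of the squared bounds, taking for the $B^-_i$ term $\max(E^H_{\max},B^{max}_i)^2$, which is a finite positive constant as claimed.
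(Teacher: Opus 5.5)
Your proposal is correct and is the standard Lyapunov drift argument from Neely that the paper invokes when it omits the proof as ``following standard procedures.'' This includes rewriting the deficit queue as $B^-_i(t+1)=\max[B^-_i(t)-(E^H_i(t)-E^L_i(t)-E^O_i(t)),0]$ and bounding the second-moment terms by a constant; your explicit standing assumptions (bounded or finite-moment channel gains that are independent across slots, and $\mathbb{E}[A_i(t)^2]<\infty$) are exactly what the paper leaves implicit to guarantee that $C_1$ is finite.
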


The proof follows standard procedures~\cite{neely2010stochastic} and is omitted for brevity. 
Based on the Lyapunov optimization framework, the control policy at each time slot 
is to opportunistically minimize the right-hand side of~\eqref{eq:dpp_bound},
which leads to the following per-slot optimization problem:
\begin{align}
    \min_{\bm{a, \tau, P, f}} \quad & \sum_{j=1}^{M} V \left[ E^T_j(t) + E^C_j(t) \right] \notag \\
    &- \sum_{i=1}^{N} Q_i(t) \left[ D^L_i(t) + D^O_i(t) - A_i(t) \right] \notag \\
    &- \sum_{i=1}^{N} B^-_i(t) \left[ E^H_i(t) - E^L_i(t) - E^O_i(t) \right] \label{prob:lya} \\
    s.t. \quad &\eqref{cons:before_constraints}-\eqref{cons:var_local} \tag{\ref{prob:lya}{a}} \label{cons:lya}
\end{align}

However, directly solving problem~\eqref{prob:lya} is computationally challenging due to the 
strong coupling among decision variables. 
For instance, the control variables for local computing and computation offloading are interdependent 
through constraint~\eqref{cons:energy_causality},
while the time allocation for WPT and wireless communication is jointly 
constrained by~\eqref{cons:time_allocation}. 
To efficiently address this issue, we introduce a \emph{relax-then-adjust} approach that decomposes 
the original problem into several independent subproblems, thus significantly reducing computational complexity. 
The details of this algorithm design are presented in the following subsections.

\subsection{Optimality Condition} 
The main idea of the proposed technique is to first relax the coupling constraints among different types 
of control variables so that the original problem~\eqref{prob:lya} can be decomposed into a set of more tractable subproblems. 
After obtaining the preliminary solutions to these subproblems, 
the control decisions are subsequently adjusted to ensure feasibility with respect to the original problem formulation. 
To obtain high-quality feasible solutions, we identify a necessary condition for optimality, 
which is then employed to guide the adjustment process. 

In the considered system, the optimization objective is to minimize the total energy consumption required 
to process the computation tasks of all WDs. 
For any given amount of arrived workload, this objective is equivalent to maximizing the overall energy efficiency, 
defined as the ratio between the total amount of completed computation tasks and the corresponding energy consumption. 
Motivated by this observation, we introduce the concept of \emph{marginal energy efficiency} for both 
local computing and computation offloading. 
These quantities, denoted by $\epsilon^L_i(t)$ and $\epsilon^O_i(t)$, respectively, are defined as
\begin{align}
    \epsilon^L_i(t) &= \frac{\partial E^L_i(t)}{\partial D^L_i(t)} = 3\kappa_i \phi_i f^2_i(t) \notag \\
    \epsilon^O_i(t) &= \frac{\partial \left( E^O_i(t) + \eta \phi_i D^O_i(t) \right)}{\partial D^O_i(t)} \notag \\
    &= \sum_{j=1}^{M} \frac{a^O_{ij}(t)v_i \ln 2}{B} \left( \frac{\sigma^2_j}{h^U_{ij}(t)} + P^O_i(t) \right) + \eta\phi_i \label{def:offloading_ee}
\end{align}
where the time allocation $\tau^O_i(t)$ is assumed to be given. 
The term $\epsilon^L_i(t)$ represents the incremental energy required to locally process one 
additional bit of data, whereas $\epsilon^O_i(t)$ reflects the corresponding incremental energy 
for offloading computation to an AP, including both transmission and edge-processing costs.
The following theorem establishes a necessary condition for optimality in the energy-efficient scheduling problem.
\begin{theorem}
  \label{theorem:optimality_condition}
  At the optimal solution, the marginal energy efficiencies of local computing and computation offloading of 
  WD $i$ are equal if (i) its offloading time is non-zero, and (ii) $f_i(t)$ and $P^O_i(t)$ are not at the boundary of their domain, i.e., 
  $\epsilon^L_i(t) = \epsilon^O_i(t)$, if $\tau^O_i(t) > 0$, $f_i(t) \in (0, f^{max}_i)$, and $P^O_i \in (0, P^{O,max}_i)$.
\end{theorem}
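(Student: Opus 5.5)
We argue by a perturbation (exchange) argument on the per-slot problem~\eqref{prob:lya}. Fix an optimal solution and a WD $i$ satisfying the three hypotheses, with $\tau^O_i(t)>0$ associated with some AP $j$. Hold every other decision variable fixed, including $\tau^O_i(t)$, all WPT variables, and all variables of the other WDs. Only $f_i(t)$ and $P^O_i(t)$ are varied. This keeps \eqref{cons:a^T_j}, \eqref{cons:a_ij}, \eqref{cons:time_allocation}, \eqref{cons:var_a} and \eqref{cons:var_wpt} satisfied automatically. Moreover, $f_i(t)$ and $P^O_i(t)$ are interior points of their boxes, so small perturbations remain feasible for \eqref{cons:var_offloading} and \eqref{cons:var_local}.

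With $\tau^L_i=\Delta t$, the terms of \eqref{prob:lya} that depend on $(f_i,P^O_i)$ form the function
\[
g(f_i,P^O_i)=V\eta\phi_i D^O_i - Q_i(t)\bigl(D^L_i+D^O_i\bigr)+B^-_i(t)\bigl(E^L_i+E^O_i\bigr).
\]
By the chain rule, $\partial E^L_i/\partial D^L_i=3\kappa_i\phi_i f_i^2=\epsilon^L_i(t)$, because $dE^L_i/df_i=3\kappa_i f_i^2\Delta t$ and $dD^L_i/df_i=\Delta t/\phi_i$. Similarly, for fixed $\tau^O_i$, $\partial(E^O_i+\eta\phi_iD^O_i)/\partial D^O_i=\epsilon^O_i(t)$, using $dE^O_i/dP^O_i=\tau^O_i$ and $dD^O_i/dP^O_i=\frac{B\tau^O_i}{v_i\ln2}\frac{h^U_{ij}}{\sigma_j^2+P^O_ih^U_{ij}}$. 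So locally we may reparametrize by the data amounts $(D^L_i,D^O_i)$, which are smooth, strictly increasing functions of $(f_i,P^O_i)$ with nonzero derivative.

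Next we perturb along the level curve of the total processed data. Shift $\delta$ bits from offloading to local computing: $D^L_i\to D^L_i+\delta$, $D^O_i\to D^O_i-\delta$. The $Q_i(t)$ term is then unchanged, and to first order the change in the WD's total energy $E^L_i+E^O_i$ is $\delta(\epsilon^L_i-\epsilon^O_i+\eta\phi_i)$.

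At this point the WD-side and AP-side costs carry different weights, $B^-_i(t)$ and $V$ respectively. This is the main obstacle: the combined marginal cost of offloading is $B^-_i\,\partial E^O_i/\partial D^O_i+V\eta\phi_i$, not exactly the unweighted $\epsilon^O_i$. I would handle it in one of two ways. The first option is to interpret the statement, as the paper's narrative does, for the energy-minimization problem with a fixed energy budget. There the WD's battery energy and the AP's computing energy are both counted as system energy: WD energy is ultimately supplied by the APs, so the weights coincide and the first-order change is $\delta(\epsilon^L_i-\epsilon^O_i)$. The second option is to state the condition in the weighted form and note that it reduces to $\epsilon^L_i=\epsilon^O_i$ when the weights match.

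In either case the key point is the same. If $\epsilon^L_i\neq\epsilon^O_i$, choose the sign of $\delta$ so the first-order change is negative. Such a $\delta$ of either sign is admissible because both variables are interior. Shifting in the direction that lowers total energy also keeps the energy causality constraint \eqref{cons:energy_causality} satisfied, since $E^L_i+E^O_i$ decreases. The result strictly improves the objective, which contradicts optimality.

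Finally, I would make the argument rigorous by defining the one-parameter curve $\delta\mapsto(f_i(\delta),P^O_i(\delta))$ via the inverse function theorem, since each data map is smooth and strictly monotone. Differentiating the objective along this curve at $\delta=0$ shows that the derivative must vanish at an interior optimum, and this vanishing condition is exactly the claimed equality of marginal energy efficiencies.
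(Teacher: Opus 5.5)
\textbf{There is a genuine gap: your closing feasibility claim about the battery constraint fails in one case, and in that case the stated equality is false.}

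Your derivative computations are correct. So is your observation about the weights: with the battery slack, the interior stationarity conditions behind \eqref{eq:optimal_f} and \eqref{eq:P_O_i} give $\epsilon^L_i(t)=Q_i(t)/B^-_i(t)$ and $\epsilon^O_i(t)=(Q_i(t)-V\eta\phi_i)/B^-_i(t)+\eta\phi_i$. These agree only if $V=B^-_i(t)$. The unweighted statement therefore only makes sense in the equal-weight ``total energy'' reading that the paper's proof implicitly adopts. Commit to your option (1); option (2) proves a different statement.

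The gap is your claim that moving in the improving direction keeps \eqref{cons:energy_causality} satisfied ``since $E^L_i+E^O_i$ decreases.'' Under option (1), what decreases to first order is $E^L_i+E^O_i+\eta\phi_iD^O_i$, not the battery draw. By your own computation, the battery draw changes by $\delta(\epsilon^L_i-\epsilon^O_i+\eta\phi_i)$.
\begin{itemize}
\item If $\epsilon^L_i>\epsilon^O_i$, you take $\delta<0$ and both quantities fall, so that case is fine.
\item If $\epsilon^L_i<\epsilon^O_i$, you take $\delta>0$. Whenever $\epsilon^O_i-\eta\phi_i<\epsilon^L_i<\epsilon^O_i$, the battery draw \emph{increases}. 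If \eqref{cons:energy_causality} is tight, the perturbed point is infeasible and no contradiction follows.
\end{itemize}

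This cannot be patched, because in that regime the claimed equality is actually false. Let $\lambda$ be the multiplier of the fixed-data constraint and $\mu\ge 0$ that of the battery constraint $E^L_i+E^O_i\le B_i(t)$. Stationarity in $f_i$ and $P^O_i$ gives $(1+\mu)\epsilon^L_i=\lambda$ and $(1+\mu)(\epsilon^O_i-\eta\phi_i)+\eta\phi_i=\lambda$. Hence $\epsilon^L_i=\epsilon^O_i-\frac{\mu}{1+\mu}\eta\phi_i$, which is strictly below $\epsilon^O_i$ whenever the battery binds, and lies exactly in the window above.

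The paper's proof differs from yours at precisely this point. It perturbs at constant WD energy: it lowers $f_i$ and puts the saved $\Delta E^L_i$ into $P^O_i$, so \eqref{cons:energy_causality} is preserved automatically. It also only argues the case $\epsilon^L_i>\epsilon^O_i$, which your iso-data argument handles correctly as well.

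To get a correct result, you must do one of the following:
\begin{itemize}
\item add the hypothesis that \eqref{cons:energy_causality} is slack for WD $i$, in which case your argument gives the equality in both directions; or
\item conclude only $\epsilon^L_i\le\epsilon^O_i$ when the battery constraint is tight.
\end{itemize}
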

\begin{IEEEproof}
  The proof proceeds by contradiction. 
  Without loss of generality, assume that (i) WD $i$ has sufficient queued data for processing, 
  and (ii) there exists an optimal solution 
  $\bm{x}^* = (\bm{a}^*, \bm{\tau}^*, \bm{P}^*, \bm{f}^*)$ such that $\epsilon^L_i(t) > \epsilon^O_i(t)$ while
  $\tau^{O,*}_i(t) > 0$, $f^*_i(t) \in (0, f^{max}_i)$, and $P^{O,*}_i(t) \in (0, P^{O,max}_i)$. 
  We construct a new feasible solution $\bm{x}'$ by slightly reducing the CPU frequency of WD $i$, 
  i.e., $f'_i(t) = f^*_i(t) - \delta$, for a sufficiently small $\delta > 0$. 
  This adjustment saves an amount of local computing energy given by 
  \begin{equation*}
    \Delta E^L_i(t) = \kappa_i \Delta t \left[ {f^*_i}^3(t) - {f'_i}^3(t) \right].
  \end{equation*}
  Since $\tau^{O,*}_i(t) > 0$, we can then increase the offloading power to 
  \begin{equation*}
    P^{O,'}_i(t) = P^{O,*}_i(t) + \frac{\Delta E^L_i(t)}{\tau^{O,*}_i(t)}, 
  \end{equation*}
  while keeping all other control variables unchanged. 
  Intuitively, this modification reallocates a small portion of the energy budget from local computing to computation offloading.
  Note that $P^{O,*}_i(t) < P^{O,max}_i$, hence $P^{O,'}_i(t)$ lies in the feasible domain for sufficiently small $\delta$.
  Since $\epsilon^L_i(t) > \epsilon^O_i(t)$ and both quantities are continuous functions of $f_i(t)$ and $P^O_i(t)$, 
  it follows that for a sufficiently small $\delta$, $\bm{x}'$ processes more computation data than
  $\bm{x}^*$ with the same energy consumption, contradicting the assumption that $\bm{x}^*$ is optimal. 
  Hence, the optimal solution must satisfy $\epsilon^L_i(t) = \epsilon^O_i(t)$.
\end{IEEEproof}

This optimality condition serves as a guiding principle for the adjustment of control variables after 
obtaining the preliminary solutions of subproblems.
The detailed algorithm design is presented in the next subsection.

\subsection{Algorithm Design}
As discussed above, our strategy is to first to relax the coupling constraints~\eqref{cons:energy_causality} and \eqref{cons:time_allocation}, 
so that problem~\eqref{prob:lya} can be decomposed into a set of independent subproblems that are easier to solve.
Once the optimal solutions to these subproblems are obtained, the control variables are adjusted according 
to the established optimality condition to ensure that the final decisions are 
feasible with respect to the original problem constraints. 

\subsubsection{WPT Subproblem} \label{section:wpt}
By substituting~\eqref{eq:wpt_energy} and~\eqref{eq:energy_harvest} into the drift-plus-penalty bound~\eqref{eq:dpp_bound} 
and grouping the WPT-related terms together, we can obtain the following WPT subproblem:
\begin{align}
  \min_{\bm{a}^T(t), \bm{P}^T(t), \bm{\tau}^T(t)}\quad & \sum_{j=1}^{M} \left( V - \sum_{i=1}^{N} B^-_i(t)\mu_i h^D_{ij}(t) \right) \notag \\
                                                       & \qquad \qquad \qquad \times a^T_j(t) P^T_j(t) \tau^T_j(t)  \label{obj:wpt} \\
    s.t. \quad & \eqref{cons:a^T_j}, \eqref{cons:var_wpt} \notag
\end{align}
Note that we have temporarily relaxed the time allocation constraint \eqref{cons:time_allocation}.
Let $c^T_j(t) = V - \sum_{i=1}^{N} B^-_i(t)\mu_i h^D_{ij}(t)$ be the coefficient of $a^T_j(t) P^T_j(t) \tau^T_j(t)$.
Since at most one AP can be selected to broadcast RF energy in each time slot,
the WPT subproblem can be easily solved by selecting the AP with the smallest negative $c^T_j(t)P^{T,max}_{j}$.
Specifically, if $c^T_j(t) \geq 0$ for all $j$, then the optimal value is obtained by setting all $a^T_j(t)$ to $0$,
indicating that no RF energy transfer occurs during slot $t$.
Otherwise, find $j^* = \argmin c^T_j(t) P^{T, max}_j$ and the corresponding optimal decisions are
$a^T_{j^*}(t) = 1, P^T_{j^*}(t) = P^{T,max}_{j^*}, \tau^T_{j^*}(t) = \Delta t$.

\subsubsection{Local Computing Subproblem} \label{section:local_computation}
In the local computing process, the only controllable variable is the CPU frequency of each WD. 
By relaxing the coupling constraint~\eqref{cons:energy_causality},
the local computing subproblem can be formulated as
\begin{align}
  \min_{\bm{f}(t)}\quad &\sum_{i=1}^{N} B^-_i(t) \kappa_i f^3_i(t) \Delta t - \sum_{i=1}^{N} Q_i(t) \frac{f_i(t) \Delta t}{\phi_i} \label{subprob:local_computing} \\
  s.t. \quad & \eqref{cons:var_local}  \notag \\
             & E^L_i(t) \leq B_i(t), \quad \forall i\in\mathcal{N} \tag{\ref{subprob:local_computing}a} \label{cons:local_computing_energy}
\end{align}
where we have relaxed the energy causality constraint~\eqref{cons:energy_causality} to \eqref{cons:local_computing_energy}
so that it only restricts the energy consumption due to local computing.
By substituting the expression of $E^L_i(t)$, given in \eqref{eq:def_E^L}, into \eqref{cons:local_computing_energy},
we have 
\begin{equation*}
  f_i(t) \leq \sqrt[\raisebox{3pt}{\scriptsize 3}]{\frac{B_i(t)}{\kappa_i \Delta t}}.
\end{equation*}
Define the upper bound of $f_i(t)$ as 
\begin{equation*}
  f^{ub}_i(t) = \min \left\{ \sqrt[\raisebox{3pt}{\scriptsize 3}]{\frac{B_i(t)}{\kappa_i \Delta t}}, f^{max}_i \right\}.
\end{equation*}
Since the objective function~\eqref{subprob:local_computing} is separable with respect to $f_i(t)$, 
the above problem can be further decomposed into $N$ independent per-device subproblems, each expressed as
\begin{equation}
  \min_{f_i(t)\in [0, f^{ub}_i(t)]} \quad B^-_i(t) \kappa_i f^3_i(t) \Delta t - Q_i(t) \frac{f_i(t) \Delta t}{\phi_i}. \label{subsubproblem:local}
\end{equation}
Each subproblem~\eqref{subsubproblem:local} is a univariate optimization problem and can be solved analytically. 
The optimal CPU frequency is attained either at the stationary point of the objective function or at one of the boundary values. 
By setting the derivative with respect to $f_i(t)$ to zero and applying the upper bound constraint, 
the optimal CPU frequency for WD $i$ is given by
\begin{equation}
    f_i(t) = \min \left\{ f^{ub}_i(t), \sqrt{\frac{Q_i(t)}{3\kappa_i \phi_i B^-_i(t)}} \right\}.
    \label{eq:optimal_f}
\end{equation}

\subsubsection{Computation Offloading Subproblem} \label{section:offloading_subproblem}
The remaining control variables are associated with the computation offloading process. 
The corresponding optimization subproblem can be formulated as:
\begin{align}
  \min_{\bm{a}^O(t), \bm{P}^O(t), \bm{\tau}^O(t)} & \sum_{j=1}^{M} \sum_{i=1}^{N} V\eta a^O_{ij}(t)\phi_i D^O_i(t) - \sum_{i=1}^{N} Q_i(t)D^O_i(t) \notag \\
    & + \sum_{i=1}^{N} B^-_i(t) P^O_i(t) \tau^O_i(t) \label{obj:offloading} \\
    s.t.\quad & \eqref{cons:a_ij}, \eqref{cons:var_offloading} \notag \\
              & \sum_{i=1}^{N} a^O_{ij}(t)\tau^O_i(t) \leq T, \quad \forall j\in\mathcal{M} \tag{\ref{obj:offloading}a}\label{cons:relax_time_allocation}  \\
              & E^O_i(t) \leq B_i(t), \quad \forall i\in\mathcal{N} \tag{\ref{obj:offloading}b} \label{cons:offloading_energy}
\end{align}
In this formulation, the original time allocation constraint \eqref{cons:time_allocation} is 
relaxed to~\eqref{cons:relax_time_allocation} in order to simplify the optimization process. 
Similar to the local computing subproblem, the energy causality constraint~\eqref{cons:energy_causality} is relaxed to~\eqref{cons:offloading_energy}, 
restricting the instantaneous energy consumption for offloading within the available battery energy.  
By substituting~\eqref{eq:def_offloading_energy} into~\eqref{cons:offloading_energy}, we obtain the equivalent upper bound for the transmission power:
$P^O_i(t) \leq \frac{B_i(t)}{\tau^O_i(t)}$.
As will be demonstrated later, the optimal time allocation $\tau^O_i(t)$ is either $\Delta t$ or $0$, 
implying that the above condition can be equivalently expressed as
$P^O_i(t) \leq \frac{B_i(t)}{\Delta t}$.
Define the upper bound of the transmission power $P^O_i(t)$ as
$P^{O,ub}_i(t) = \min \left\{ P^{O,max}_i, \frac{B_i(t)}{\Delta t} \right\}$,
then constraint \eqref{cons:offloading_energy} can be combined with the domain of $P^O_i(t)$ to obtain the unified constraint:
\begin{equation*}
  0 \leq P^O_i(t) \leq P^{O,ub}_i(t), \quad \forall i\in\mathcal{N}.
\end{equation*}

The resulting mixed-integer programming (MIP) problem is non-convex and, in general, does not admit an efficient closed-form solution.
Nevertheless, through detailed theoretical analysis, it can be shown that problem~\eqref{obj:offloading} can be 
equivalently transformed into an \emph{Assignment Problem}, which can be solved optimally in polynomial time using established algorithms. 
The detailed derivation and algorithmic procedure are presented below.

\textbf{Step 1: Decomposition by Association Decisions.}
We first assume that the association between WDs and APs is given. 
Let $\mathcal{N}_j(t) = \{ i \,|\, a^O_{ij}(t) = 1 \}$ denote the set of WDs associated with AP $j$ during time slot $t$. 
Given the association decisions, the optimization problem~\eqref{obj:offloading} becomes separable across APs. 
Thus, for each AP $j$, the corresponding offloading subproblem can be expressed as:
\begin{align}
  \min_{P^O_i(t), \tau^O_i(t)}\quad & \sum_{i\in\mathcal{N}_j} V\eta \phi_i D^O_i(t) - \sum_{i\in\mathcal{N}_j} Q_i(t) D^O_i(t) \notag \\
                                    & + \sum_{i\in\mathcal{N}_j} B^-_i(t) P^O_i(t) \tau^O_i(t) \label{subproblem:offloading_obj} \\
  s.t. \quad & \sum_{i\in\mathcal{N}_j} \tau^O_i(t) \leq \Delta t \notag \\
             & 0 \leq \tau^O_i(t) \leq \Delta t, \quad \forall i\in\mathcal{N}_j \notag \\
             & 0 \leq P^O_i(t) \leq P^{O,ub}_i(t), \quad \forall i\in\mathcal{N}_j \notag
\end{align}
By substituting the expression of $D^O_i(t)$ into~\eqref{subproblem:offloading_obj}, the objective can be rewritten as
$\sum_{i\in\mathcal{N}_j} c^O_{ij}(t) \tau^O_i(t)$,
where the coefficient $c^O_{ij}(t)$ is defined as:
\begin{align}
  c^O_{ij}(t) = & \frac{\left( V\eta \phi_i - Q_i(t) \right) B}{v_i} \log_2 \left( 1 + \frac{P^O_i(t) h^U_{ij}(t)}{\sigma^2_j} \right) \notag \\
                & + B^-_i(t) P^O_i(t). \label{eq:c_O_ij}
\end{align}

\textbf{Step 2: Optimal Transmission Power.}
We first derive the optimal transmission power. 
Given that $\tau^O_i(t)\geq 0$, to minimize the objective function,
the optimal value of $P^O_i(t)$ must minimize $c^O_{ij}(t)$ for any feasible $\tau^O_i(t)$. 
Therefore, the candidate solutions lie either at the boundary of the feasible power region or at a stationary point of $c^O_{ij}(t)$. 
By setting the first-order derivative of $c^O_{ij}(t)$ with respect to $P^O_i(t)$ to zero,
the analytical expression of the stationary point can be obtained as follows:
\begin{equation*}
  \widehat{P}^O_i(t) = \frac{( Q_i(t) - V\phi_i\eta )B}{B^-_i(t) v_i \ln 2} - \frac{\sigma^2_{j}}{h^U_{ij}(t)}.
\end{equation*}
Consequently, the corresponding optimal transmission power is
\begin{equation}
  P^O_i(t) = \max \left\{ 0,  \min \left\{P^{O,ub}_i(t), \widehat{P}^O_i(t) \right\} \right\}
  \label{eq:P_O_i}
\end{equation}

\textbf{Step 3: Optimal Time Allocation.}
Next, we show how to obtain the optimal time allocation.
Similar to the WPT subproblem,
to minimize $\sum_{i\in\mathcal{N}_j} c^O_{ij}(t) \tau^O_i(t)$,
the entire transmission duration $\Delta t$ should be assigned to the WD with the smallest negative coefficient $c^O_{ij}(t)$.
Specifically, let $i^* = \argmin_{i\in\mathcal{N}_j} c^O_{ij}(t)$.
If $c^O_{i^*j}(t) < 0$, then we set $\tau^O_{i^*}(t) = \Delta t$ and $\tau^O_i(t) = 0$ for all $i \neq i^*$.
Otherwise, set $\tau^O_i = 0$ for all $i \in \mathcal{N}_j$.

\textbf{Step 4: Association Optimization.}
Based on the above analysis, we can finally optimize the association between WDs and APs.
As discussed previously, each AP will allocate its entire transmission time to the WD that yields the minimum negative value of $c^O_{ij}(t)$. 
Since the remaining WDs have no transmission time, their contribution to the objective function becomes zero. 
Consequently, these devices can be treated as unassociated.

Therefore, the association optimization reduces to selecting exactly one WD for each AP such that the total cost $\sum c^O_{ij}(t)\Delta t$ is minimized. 
This problem can be equivalently formulated as a classical Assignment Problem. 
In the corresponding bipartite graph representation, one partition consists of WDs and the other consists of APs. 
The weight of the edge between WD $i$ and AP $j$ is defined as
\begin{equation}
  w_{ij}(t) = 
  \begin{cases}
    c^O_{ij}(t) \Delta t, & \text{if } c^O_{ij}(t) < 0, \\
    0, & \text{otherwise.}
  \end{cases}
  \label{eq:w_ij}
\end{equation}
Note that we set the weight $w_{ij}(t)$ to $0$ if the corresponding $c^O_{ij}(t) \geq 0$, as no transmission time will be allocated 
to WDs with non-negative coefficient, hence yeilding zero contribution to the overall objective.

The assignment problem can be solved efficiently using the Hungarian Algorithm with computational complexity $O(N^3)$, under the assumption $N > M$ \cite{kuhn1955hungarian}. 
Solving this produces the optimal association matrix $a^O_{ij}(t)$, where each AP is associated with exactly one WD. 
Subsequently, the optimal transmission time and power variables can be updated based on the results obtained above.

\subsubsection{Improved Algorithm for Computation Offloading} \label{section:improved_offloading}
According to the proposed \emph{relax-then-adjust} strategy, the adjustment of control variables should be performed \emph{after} 
deriving the preliminary solutions of all subproblems. 
However, in the computation offloading subproblem, the optimal association decision depends on the coefficients $c^O_{ij}(t)$, 
which are functions of the transmission power $P^O_i(t)$ determined from~\eqref{eq:P_O_i}. 
If the resulting solution violates the energy causality constraint, the values of $P^O_i(t)$ and the CPU frequency $f_i(t)$ must be 
adjusted using the optimality condition in Theorem~\ref{theorem:optimality_condition}. 
Nevertheless, modifying $P^O_i(t)$ after solving the offloading subproblem 
implies that the previously determined association decisions may no longer retain optimality.

To circumvent this issue, we perform the adjustment of $f_i(t)$ and $P^O_i(t)$ \emph{prior} to determining the association and offloading duration. 
Specifically, upon obtaining the initial value of $P^O_i(t)$ in Step~2, 
we immediately verify whether the energy causality constraint~\eqref{cons:energy_causality} is satisfied. 
If so, the algorithm proceeds without modification. 
Otherwise, both $f_i(t)$ and $P^O_i(t)$ are updated such that $E^L_i(t) + E^O_i(t) = B_i(t)$,
while simultaneously enforcing the optimality condition $\epsilon^L_i(t) = \epsilon^O_i(t)$.
Combining these two equalities leads to a closed-form characterization of the optimal $f_i(t)$, 
obtained by solving the following univariate cubic equation:
\begin{align}
  \frac{v_i \ln 2}{B} \kappa_i f^3_i(t) + & 3 \kappa_i \phi_i f^2_i(t) \notag \\
                                          & = \frac{v_i \ln 2}{B} \left( \frac{\sigma^2_j}{h^U_{ij}(t)} + \frac{B_i(t)}{\Delta t} \right) + \eta \phi_i.
  \label{eq:cubic_f_i}
\end{align}
Let $f_i^*(t)$ denote the meaningful solution of~\eqref{eq:cubic_f_i}. We then update the transmission power as:
\begin{equation}
  P^O_i(t) = \frac{B_i(t)}{\Delta t} - \kappa_i {f^*_i}^3(t).
  \label{eq:update_P}
\end{equation}
With the updated $P^O_i(t)$, the coefficients $c^O_{ij}(t)$ can be recalculated.

Next, the updated coefficients are employed in Step~4 to optimize the association decisions and to determine the offloading time. 
For WDs assigned a nonzero offloading duration, their CPU frequencies are set to the adjusted values, i.e., $f_i(t) = f_i^*(t)$. 
For the remaining WDs, $f_i(t)$ is kept to the original optimal value obtained from~\eqref{eq:optimal_f}.
The complete procedure is summarized in Algorithm~\ref{alg:offloading}.

\begin{algorithm}[t]
  \caption{Algorithm for the computation offloading subproblem}
  \label{alg:offloading}
    \SetKwInOut{KwIn}{Input}

    \KwIn{$\bm{f}(t)$ calculated from \eqref{eq:optimal_f}}

    Initialize: $\bm{a}^O(t), \bm{P}^O(t), \bm{\tau}^O(t), \bm{y}(t) \leftarrow \bm{0}$\;
    \For{$j \in \mathcal{M}$}{
      \For{$i \in \mathcal{N}$}{
        Calculate $P^O_i(t)$ according to \eqref{eq:P_O_i}\;
        \If{$E^L_i(t) + E^O_i(t) > B_i(t)$}
        {
          $y_{ij}(t) \leftarrow 1$\;
          Obtain $f^*_i(t)$ by solving \eqref{eq:cubic_f_i}\;
          Calculate $P^O_i(t)$ according to \eqref{eq:update_P}\;
        }
        Calculate $c^O_{ij}(t)$ according to \eqref{eq:c_O_ij}\;
       }
    }
    Calculate $w_{ij}(t)$ according to \eqref{eq:w_ij}\;
    Obtain the optimal association $a^O_{ij}(t)$ using the Hungarian Algorithm\;
    \For{$j \in \mathcal{M}$}{
      Find $i^*$ such that $a^O_{i^*j} = 1$\;
      \If{$c^O_{i^*j}(t) < 0$}
      {
        Set $\tau^O_{i^*} \leftarrow \Delta t$\;
        \eIf{$y_{i^* j}(t) = 1$}
        {
          Obtain $f^*_{i^*}(t)$ by solving \eqref{eq:cubic_f_i}\;
          $f_{i^*}(t) \leftarrow f^*_{i^*}(t)$\;
          Update $P^O_{i^*}(t)$ according to \eqref{eq:update_P}\;
        }
        {
          Update $P^O_{i^*}(t)$ according to \eqref{eq:P_O_i}\;
        }
      }
    }
    \KwRet{$\bm{a}^O(t), \bm{P}^O(t), \bm{\tau}^O(t), \bm{f}(t)$}.
\end{algorithm}

\subsubsection{Adjustment for the Time Allocation Constraint} 
The improved algorithm described in the previous subsection ensures compliance with the energy causality constraint. 
However, the relaxed time allocation constraint~\eqref{cons:relax_time_allocation} may still violate the original constraint~\eqref{cons:time_allocation}. 
Recall that constraint~\eqref{cons:relax_time_allocation} is obtained by omitting the WPT duration in the left-hand side of~\eqref{cons:time_allocation}, 
meaning that a violation can occur at the AP that broadcasts RF energy.

Let $j^*$ denote the AP performing WPT during time slot $t$. According to the result derived in Section~\ref{section:wpt}, 
we have $\tau^T_{j^*}(t) = \Delta t$.
Moreover, as analyzed in Section~\ref{section:offloading_subproblem}, at most one WD can be associated with AP $j^*$ for 
computation offloading in the same time slot. 
Denote this WD as $i^*$. 
From the allocation strategy described in Section~\ref{section:improved_offloading}, we similarly have $\tau^O_{i^*}(t) = \Delta t$.

To ensure satisfaction of the original constraint~\eqref{cons:time_allocation}, 
we have to adjust the duration of these two processes so that $\tau^T_{j^*}(t) + \tau^O_{i^*}(t) \leq \Delta t$.
To determine the optimal time allocation, we compare the contributions of the two processes to the objective function. 
The coefficients of $\tau^T_{j^*}(t)$ and $\tau^O_{i^*}(t)$ in the objective are given by $c^T_{j^*}(t) P^{T,\max}_{j^*}$ and $c^O_{i^* j^*}(t)$, respectively. 
Therefore, to minimize the objective, the entire slot duration should be allocated to the process with the smaller coefficient,
i.e., let
\begin{equation*}
  \left( \tau^T_{j^*}(t), \tau^O_{i^*}(t) \right) =
  \begin{cases}
    (\Delta t, 0), &\text{if } c^T_{j^*}(t) P^{T,max}_{j^*} < c^O_{i^* j^*}(t) \\
    (0, \Delta t), &\text{otherwise }
  \end{cases}.
\end{equation*}
The complete procedure is summarized in Algorithm \ref{alg:time_allocation}.

The computational complexity of the complete algorithm is analyzed as follows. 
The time complexities of the WPT subproblem and the local computing subproblem are $\mathcal{O}(MN)$ and $\mathcal{O}(N)$, respectively. 
The computation offloading subproblem is solved using Algorithm~\ref{alg:offloading}, whose complexity is dominated by employing the Hungarian algorithm
in line 14, resulting in a time complexity of $\mathcal{O}(N^3)$ \cite{kuhn1955hungarian}.
In addition, the adjustment procedure in Algorithm~\ref{alg:time_allocation} incurs a time complexity of $\mathcal{O}(N)$. 
Consequently, the overall time complexity of the proposed single-slot algorithm is $\mathcal{O}(N^3)$.

\begin{algorithm}[t]
  \caption{Adjustment of time allocation} 
  \label{alg:time_allocation}
    \SetKwInOut{KwIn}{Input}


    \KwIn{Preliminary time allocation: $\bm{\tau}(t)$}
    Find $j^*$ such that $a^T_{j^*}(t) = 1$\;
    \For{$i \in \mathcal{N}$}{
      \If{$a^O_{ij^*}(t) = 1$ and $\tau^T_{j^*}(t) + \tau^O_i(t) > \Delta t$}
      {
        \eIf{$c^T_{j^*}(t) P^{T,max}_{j^*} < c^O_{ij^*}(t)$}
        {
          $\tau^T_{j^*}(t) \leftarrow \Delta t, \tau^O_{i^*}(t) \leftarrow 0$\;
        }
        {
          $\tau^T_{j^*}(t) \leftarrow 0, \tau^O_{i^*}(t) \leftarrow \Delta t$\;
        }
      }
    }
    \KwRet{ $\bm{\tau}(t)$ }.
\end{algorithm}

\subsection{Performance Analysis}
In this subsection, we analyze the performance of the proposed algorithm by characterizing the gap between its achieved objective value and the optimal one, 
as well as the inherent trade-off between energy efficiency and average delay. 
Let $P$ denote the original optimization problem~\eqref{prob:lya}, 
and let $P'$ denote its relaxed counterpart obtained by replacing constraint~\eqref{cons:time_allocation} with~\eqref{cons:relax_time_allocation}. 
As established earlier, the preliminary solution produced by Algorithm~\ref{alg:offloading}, 
together with the WPT decision derived in Section~\ref{section:wpt}, corresponds to the optimal solution of $P'$. 
The corresponding objective value is denoted by $z'$.
Define $z^*$ and~$z$ as the optimal objective value of $P$ and the objective value obtained by our final algorithm, respectively. 
We demonstrate that the performance gap $z - z^*$ is bounded by a finite constant.

Let $h^{D,\max}_{i}$ be an upper bound of the downlink channel gain $h^D_{ij}(t)$, 
and define $P^{T,\max} = \max_{j\in\mathcal{M}} P^{T,\max}_j$ as the maximum WPT transmission power among all APs. Then the following result holds:
\begin{lemma}
  The optimality gap between the proposed algorithm and the global optimum is bounded by
  \begin{equation*}
    z - z^* \leq C_2,
  \end{equation*}
  where 
  \begin{equation*}
    C_2 = \left(  \sum_{i=1}^N B^{max}_i \mu_i h^{D,max}_{i} - V \right) P^{T,max} \Delta t.
  \end{equation*}
  \label{lemma:gap}
\end{lemma}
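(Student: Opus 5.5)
The argument is a sandwich between the relaxed problem and the final algorithm: $z' \le z^* \le z$. Since $P'$ is obtained from $P$ by weakening constraint \eqref{cons:time_allocation} to \eqref{cons:relax_time_allocation}, every feasible point of $P$ is feasible for $P'$. Hence $z' \le z^*$, and it suffices to show $z - z' \le C_2$. As stated before the lemma, the preliminary output of Algorithm~\ref{alg:offloading}, together with the WPT decision of Section~\ref{section:wpt}, attains $z'$. The final solution differs from this preliminary one only through Algorithm~\ref{alg:time_allocation}. So the whole proof reduces to bounding how much that single adjustment can raise the objective.

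\textbf{Structure of the adjustment.} Let $j^*$ be the AP selected for WPT and $i^*$ the WD associated with it.
\begin{itemize}
  \item If no WPT occurs, or $\tau^O_{i^*}(t)=0$, Algorithm~\ref{alg:time_allocation} changes nothing and $z=z'$.
  \item Otherwise the objective is linear in $\tau^T_{j^*}(t)$ and $\tau^O_{i^*}(t)$, with coefficients $c^T_{j^*}(t)P^{T,\max}_{j^*}$ and $c^O_{i^*j^*}(t)$. Both are negative by construction, since each process was activated only for a strictly negative coefficient.
  \item All other variables (powers, frequencies, associations to other APs, $f_i(t)$) are untouched. Changing $\tau^O_{i^*}$ to $0$ does not break energy causality, because it only lowers consumption. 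I would note this so that $z$ is evaluated on a feasible point of $P$.
\end{itemize}

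\textbf{Bounding the change.} Algorithm~\ref{alg:time_allocation} keeps the process with the smaller coefficient and zeroes the other. Therefore
\[
z - z' = -\max\{c^T_{j^*}(t)P^{T,\max}_{j^*},\, c^O_{i^*j^*}(t)\}\,\Delta t \le -c^T_{j^*}(t)P^{T,\max}_{j^*}\Delta t ,
\]
where the inequality uses that the max is at least $c^T_{j^*}P^{T,\max}_{j^*}$. Substituting the definition of $c^T_{j^*}(t)$,
\[
-c^T_{j^*}(t) = \sum_i B^-_i(t)\mu_i h^D_{ij^*}(t) - V .
\]
Using $B^-_i(t)\le B^{\max}_i$ (shown earlier), $h^D_{ij^*}(t)\le h^{D,\max}_i$ and $P^{T,\max}_{j^*}\le P^{T,\max}$, this gives $z-z'\le C_2$.

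\textbf{Main obstacle.} The delicate step is the last multiplication. It needs $-c^T_{j^*}(t) > 0$ so that replacing $P^{T,\max}_{j^*}$ by the larger $P^{T,\max}$ keeps the bound valid. This holds because WPT is activated only when $c^T_{j^*}(t)<0$; I would state this explicitly.

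Two further points need care:
\begin{itemize}
  \item In the case $z=z'$, we still need $C_2 \ge 0$. If the WPT branch was never active this can fail, but then the gap is $0$ anyway. So the bound should be phrased as holding whenever an adjustment occurs, and trivially otherwise. Equivalently, whenever WPT is active we have $C_2 \ge -c^T_{j^*}P^{T,\max}_{j^*}\Delta t > 0$.
  \item I would double-check that the preliminary solution is genuinely optimal for $P'$, despite the energy-causality adjustments in Algorithm~\ref{alg:offloading}. I would take this as given, since the paper asserts it just before the lemma.
\end{itemize}
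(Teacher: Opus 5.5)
Your proposal is correct and follows the paper's proof essentially step for step. As in the paper, $z'\le z^*$ by relaxation, the only change comes from the time-allocation adjustment, both branches give $z-z'\le -c^T_{j^*}(t)P^{T,\max}_{j^*}\Delta t$, and $B^-_i(t)\le B^{\max}_i$ then yields $C_2$, with optimality of the preliminary solution for $P'$ taken as given just as the paper does. Your two extra remarks are valid refinements the paper leaves implicit. Replacing $P^{T,\max}_{j^*}$ by $P^{T,\max}$ does use $c^T_{j^*}(t)<0$. Also, since $z\ge z^*$ always, the bound as literally stated fails when $C_2<0$ (then no WPT occurs and the gap is $0$), so the precise statement is $z-z^*\le\max\{C_2,0\}$.
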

\begin{IEEEproof}
  Since $P'$ is obtained by relaxing $P$, it follows that $z' \leq z^*$. The difference between $z$ and $z'$ arises solely 
  from adjustment operations in Algorithm~\ref{alg:time_allocation}. 
  Specifically, if $c^T_{j^*}(t) P^{T,\max}_{j^*} < c^O_{i^* j^*}(t)$, then:
  \[
  z - z' = -c^O_{i^* j^*}(t) \Delta t,
  \]
  otherwise:
  \[
  z - z' = -c^T_{j^*}(t) P^{T,\max}_{j^*} \Delta t.
  \]
  Both cases imply:
  \[
  z - z' \leq -c^T_{j^*}(t) P^{T,\max}_{j^*} \Delta t.
  \]
  Since
  \[
  c^T_{j^*}(t) = V - \sum_{i=1}^{N} B^-_i(t)\mu_i h^D_{ij^*}(t)
  \geq V - \sum_{i=1}^{N} B^{\max}_i \mu_i h^D_{ij^*}(t),
  \]
  we conclude that
  \[
  z - z' \leq
  \left(\sum_{i=1}^{N} B^{\max}_i \mu_i h^D_{ij^*}(t) - V\right)
  P^{T,\max}_{j^*}\Delta t
  \leq C_2.
  \]
  Combining with $z' \leq z^*$ yields the desired result.
\end{IEEEproof}

Lemma \ref{lemma:gap} establishes the performance gap for each individual time slot. 
Building upon this result, the overall performance gap of the proposed online scheduling algorithm can be derived as follows:
\begin{theorem}
  \label{theorem:performance}
  The long-term average energy consumption of the proposed algorithm satisfies:
  \begin{equation*}
    \lim_{T\to\infty} \frac{1}{T} \sum_{t=0}^{T-1} \sum_{j=1}^{M} \mathbb{E}\left\{ E^T_j(t) + E^C_j(t) \right\} 
    \leq \widebar{E}^* + \frac{C_1 + C_2}{V},
  \end{equation*}
  where $\widebar{E}^*$ is the optimal average energy consumption of the system, and $C_1$ and $C_2$ are constants defined in 
  Lemma~\ref{lemma:dpp_bound} and Lemma~\ref{lemma:gap}, respectively. 
  Moreover, the time-average queue length satisfies:
  \begin{align}
    \lim_{T\to\infty} & \frac{1}{T} \sum_{t=0}^{T-1} \sum_{i=1}^{N} \mathbb{E} \left\{ Q_i(t) \right\} \notag \\
    & \leq \frac{C_1 + C_2 + V \left(P^{T, max} \Delta t + \sum_{i=1}^N \eta \phi_i \lambda_i \right) }{\zeta},
    \label{eq:queue_length}
  \end{align}
  where $\zeta$ denotes the margin to the network stability region, i.e., the queues can be stabilized under arrival 
  rates $\lambda_i + \zeta$ for all $i \in \mathcal{N}$.
\end{theorem}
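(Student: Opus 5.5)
The plan is the standard Lyapunov drift-plus-penalty argument, with an approximate per-slot minimizer (``$C$-additive approximation'' in Neely's terminology) instead of an exact one. Let $\Psi(t)$ denote the right-hand side of \eqref{eq:dpp_bound} without $C_1$, evaluated at a given control action. By Lemma~\ref{lemma:gap}, the action produced by our algorithm satisfies $\Psi^{\mathrm{alg}}(t) \le \Psi^{\pi}(t) + C_2$ for every feasible action $\pi$ of \eqref{prob:lya}. This includes any stationary randomized policy that chooses its actions based only on the current channel states, independently of $\bm{\Theta}(t)$. Combining this with Lemma~\ref{lemma:dpp_bound} gives
\begin{equation*}
\Delta_V L(t) \le C_1 + C_2 + V\,\mathbb{E}\{E^{\pi}(t)\} - \sum_i Q_i(t)\,\mathbb{E}\{D^{\pi}_i(t) - A_i(t)\} - \sum_i B^-_i(t)\,\mathbb{E}\{E^{H,\pi}_i - E^{L,\pi}_i - E^{O,\pi}_i\},
\end{equation*}
where $E^\pi(t)$ is the total AP energy under $\pi$.

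\textbf{Energy bound.} I would invoke the standard existence result for i.i.d.\ channels and arrivals~\cite{neely2010stochastic}. It gives a stationary randomized policy with $\mathbb{E}\{E^\pi\} = \widebar{E}^*$, $\mathbb{E}\{D^\pi_i\} \ge \lambda_i$, and nonnegative expected net battery drift, i.e.\ $\mathbb{E}\{E^{H,\pi}_i - E^{L,\pi}_i - E^{O,\pi}_i\} \ge 0$. Both queue terms are then nonpositive, so $\Delta_V L(t) \le C_1 + C_2 + V\widebar{E}^*$. Taking expectations, summing over $t = 0,\dots,T-1$, telescoping with $L(T) \ge 0$ and $L(0) < \infty$, and dividing by $VT$ yields the first claim as $T \to \infty$.

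\textbf{Queue bound.} Here I would instead use the $\zeta$-slack policy: a stationary policy that supports rates $\lambda_i + \zeta$, so that $\mathbb{E}\{D^\pi_i - A_i\} \ge \zeta$, again with nonnegative battery drift. This gives
\begin{equation*}
\Delta L(t) \le C_1 + C_2 + V\,\mathbb{E}\{E^\pi(t)\} - V\,\mathbb{E}\{E^{\mathrm{alg}}(t)\} - \zeta \sum_i Q_i(t).
\end{equation*}
I then bound $V\,\mathbb{E}\{E^\pi\}$ crudely by its per-slot maximum: at most one AP transmits, contributing at most $P^{T,\max}\Delta t$, and the edge computing cost per slot is at most $\eta\sum_i\phi_i D^O_i$. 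The edge term should be bounded in expectation by $\eta\sum_i\phi_i\lambda_i$ plus slack; I would choose $\pi$ so that its offloaded volume matches the arrivals, possibly absorbing the extra $\zeta$ into constants. I drop $-V\,\mathbb{E}\{E^{\mathrm{alg}}\} \le 0$. Telescoping and dividing by $\zeta T$ then gives \eqref{eq:queue_length}.

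\textbf{Main obstacle.} The delicate step is justifying the stationary policies for the battery queue. The battery has a finite capacity, the $\min$ in the $B$ update, and the causality constraint \eqref{cons:energy_causality}, and the chosen $\pi$ must be feasible at the current battery level. Two sub-issues need care here:
\begin{itemize}
\item First, I would argue that the $B^-$ term either has the right sign, because the $\min$ truncation only decreases $B^-$ drift in the favorable direction, or is bounded by a constant absorbed into $C_1$, since $B^-_i \in [0, B^{\max}_i]$ and all energy increments are bounded.
\item Second, if Lemma~\ref{lemma:gap} applies to comparisons only within the feasible set at state $\bm{\Theta}(t)$, then the comparison policy must respect $B_i(t)$. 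I would handle this by either taking $\pi$ among policies that are always feasible, or treating any violation as a bounded additive error folded into the constants.
\end{itemize}
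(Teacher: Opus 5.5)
Your route is the same as the paper's. Its proof is a one-line appeal to Theorem~4.8 of Neely's book, the $C$-additive approximation theorem with $C=C_2$, which you have essentially unrolled. You have also correctly located the delicate step, but it is a genuine gap, and neither of your fixes closes it.

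The inequality $\Psi^{\mathrm{alg}}(t)\le\Psi^{\pi}(t)+C_2$ from Lemma~\ref{lemma:gap} holds only for actions feasible for \eqref{prob:lya} at the current state. Constraint \eqref{cons:energy_causality} makes that feasible set depend on $B_i(t)$. Neely's theorem assumes the action set depends only on the random event $\omega(t)$, which is exactly why an $\omega$-only comparison policy is always admissible there.

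\emph{Your option (i) fails.} Energy harvested in slot $t$ is usable only from slot $t+1$. Hence any action feasible at a state with $B_i(t)=0$ has $E^L_i(t)=E^O_i(t)=0$, and therefore $D^L_i(t)=D^O_i(t)=0$. An $\omega$-only policy feasible at every battery level thus serves nothing at WD $i$ and cannot satisfy $\mathbb{E}\{D^\pi_i\}\ge\lambda_i$. Such states do occur: batteries may start empty, and for large $Q_i(t)$ the per-slot rules push $E^L_i(t)+E^O_i(t)$ up to $B_i(t)$.

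\emph{Your option (ii) fails too.} If you truncate $\pi$ to the available energy, $\Psi$ changes by $\sum_i Q_i(t)\,\mathbb{E}\{D^{\pi}_i-D^{\tilde\pi}_i\}$. This is proportional to the unbounded $Q_i(t)$, not a constant, and it is precisely the term that $-\zeta\sum_i Q_i(t)$ must dominate. Your other sub-issue, the $\min$ truncation in the battery update, is harmless and already absorbed in Lemma~\ref{lemma:dpp_bound}.

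This is not a repairable technicality under the stated hypotheses. Since $B^-_i(t)\le B^{\max}_i$ and $h^D_{ij}(t)\le h^{D,\max}_i$, any $V>\sum_i B^{\max}_i\mu_i h^{D,\max}_i$ gives $c^T_j(t)>0$ for all $j$ and $t$. Then the algorithm, and indeed any exact minimizer of \eqref{prob:lya}, never performs WPT. With a finite total energy budget, the long-run service rate of every WD is zero, so $Q_i(t)$ grows linearly. This contradicts \eqref{eq:queue_length} for any $\zeta>0$. For such $V$ the constant $C_2$ as written is also negative, so Lemma~\ref{lemma:gap} can only hold with $\max\{C_2,0\}$.

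A valid proof therefore needs an extra ingredient linking $V$ to the battery capacity. Examples are restricting $V$ below such a threshold, or letting $B^{\max}_i$ grow with $V$ as in perturbed-Lyapunov analyses of energy-harvesting networks. It also needs a sample-path argument bounding the $Q$-weighted service lost when \eqref{cons:energy_causality} binds. The paper's citation of Theorem~4.8 does not supply either.

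A smaller point concerns the queue bound. The $\zeta$-slack policy's expected edge energy $\eta\sum_i\phi_i\mathbb{E}\{D^{O,\pi}_i\}$ need not be at most $\eta\sum_i\phi_i\lambda_i$. Any excess is multiplied by $V$, so it cannot be absorbed into $C_1+C_2$ without changing the stated bound.
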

\begin{IEEEproof}
  The proof directly follows from Theorem~4.8 in~\cite{neely2010stochastic}, 
  with the fact that the per-slot WPT energy consumption is upper bounded by $P^{T,\max} \Delta t$, 
  while the energy consumption due to computation at APs is upper bounded by $\sum_{i=1}^{N} \eta \phi_i \lambda_i$, 
  and $\widebar{E}^{\min}\geq 0$, which yields~\eqref{eq:queue_length}.
  %
\end{IEEEproof}

According to the Little's Law \cite{little1961proof}, 
the average delay of tasks is linearly proportional to the average queue length.
Hence, Theorem~\ref{theorem:performance} establishes an $\mathcal{O}(1/V)$--$\mathcal{O}(V)$ trade-off between system energy efficiency and average delay.

\section{Algorithmic Enhancements for Improved Stability and Latency} \label{section:improvement}

\subsection{Practical Considerations on Queue Magnitudes}
During the experimental evaluation, two practical issues were identified concerning the relative magnitudes of different system metrics. 
First, the battery capacity of commercial IoT WDs is typically several orders of magnitude greater than the amount of energy harvested within 
a single time slot. 
For instance, a 220~mAh coin-cell battery has an energy capacity of approximately 2.4~kJ, 
whereas the average energy harvested per slot is generally below $10^{-4}$~J. 
Consequently, the variation in $B^{-}_i(t)$ becomes negligibly small compared to the overall battery capacity, 
limiting the algorithm’s ability to adaptively adjust control decisions based on the instantaneous residual energy of each WD. 
To mitigate this issue, we introduce a \emph{virtual battery capacity} that is significantly smaller than the physical capacity, 
thereby amplifying the observable energy fluctuations and improving the responsiveness of the algorithm.

Second, when measured in Joules, the magnitude of $B^{-}_i(t)$ is substantially smaller than that of the data queue length $Q_i(t)$. 
As a result, in the process of minimizing queue lengths, the optimization is dominated by $Q_i(t)$, rendering the effect of $B^{-}_i(t)$ marginal. 
To balance the influence of these two queues, we introduce scalar coefficients $\beta_Q$ and $\beta_B$ to rescale $Q_i(t)$ and $B^{-}_i(t)$ respectively, 
ensuring that their magnitudes are comparable.
Accordingly, the quadratic Lyapunov function in~\eqref{eq:lyapunov_function} is redefined as
\begin{equation*}
  L(t) = \frac{1}{2} \sum_{i=1}^{N} \left[ \left( \beta_Q Q_i(t) \right)^2 + \left( \beta_B B^-_i(t) \right)^2 \right].
\end{equation*}
This modification preserves the stability framework while enhancing the sensitivity of the algorithm to variations in both data and energy queues.

\subsection{Improved Latency with Place-Holder Backlogs}
In Lyapunov-based stochastic optimization, queue backlogs can be interpreted as stochastic analogs of Lagrange multipliers in classical static 
convex optimization problems~\cite{eryilmaz2007fair,5291609}. 
These queue variables must be sufficiently large to convey meaningful information to the stochastic optimizer regarding optimal control actions. 
However, in many cases, it is possible to \emph{virtually inflate} the perceived queue length, 
thereby guiding the optimizer toward near-optimal decisions while maintaining smaller actual queue sizes. 
This approach allows the system to achieve comparable performance with significantly reduced queueing delay.

To illustrate this concept, consider the data queue $Q_i(t)$ at WD $i$. 
Fig.~\ref{fig:Q_dynamics} depicts the temporal evolution of a representative $Q_i(t)$. 
It can be observed that the queue length stabilizes after several time slots and oscillates around a steady-state value, 
which corresponds to the optimal Lagrange multiplier~\cite{5291609}, denoted by $q^{*}_{V,i}$. 
Since control decisions depend on the \emph{queue length} rather than the actual number of bits in the queue, 
replacing a portion of the real bits with \emph{virtual} or \emph{placeholder} bits does not alter the control behavior 
but effectively reduces the real queue occupancy and thus the average delay.

Let $Q^{\mathrm{act}}_i(t)$ denote the actual number of bits in the queue of WD $i$, and introduce $q_i$ placeholder bits (i.e., virtual backlog elements). 
The effective queue length used in control decisions is then expressed as
\begin{equation}
  Q_i(t) = Q^{\mathrm{act}}_i(t) + q_i.
  \label{eq:Q_ph_update}
\end{equation}
The key question is how to determine an appropriate value of $q_i$. 
A straightforward choice is to find a constant $q^0_i$ such that there exists a finite time $t_0$ satisfying $Q_i(t) \geq q^0_i$ for all $t \geq t_0$. 
Setting $q_i = q^0_i$ at initialization (i.e., $Q_i(0) = q_i$) ensures that the system’s 
long-term average energy consumption remains unchanged while the average queue backlog is reduced by $q^0_i$.

However, in practice, $q^0_i$ may be relatively small, and the average backlog may still scale as $\mathcal{O}(V)$. 
To achieve a more pronounced improvement, we can relax the requirement that $Q_i(t)$ never falls below $q_i$, 
and instead choose a larger $q_i$ that allows rare deviations below this threshold. 
The work in~\cite{5291609} demonstrates that the probability of large deviations of $Q_i(t)$ from its 
steady-state value $q^{*}_{V,i}$ decays exponentially with the deviation magnitude. 
Consequently, it can be shown that the average system energy consumption remains asymptotically unchanged if the placeholder is set as
\[
q_i = \max \left[ q^{*}_{V,i} - \log^2(V),\, 0 \right].
\]
This adjustment effectively reduces the average backlog by $q_i$, 
resulting in an overall average backlog scaling as $\mathcal{O}(\log^2(V))$. 
Therefore, the original $\mathcal{O}(1/V)$--$\mathcal{O}(V)$ trade-off between energy efficiency and 
delay is improved to an $\mathcal{O}(1/V)$--$\mathcal{O}(\log^2(V))$ trade-off \cite{5291609}.

\begin{algorithm}[t]
  \caption{Improved Algorithm with Place-Holder} 
  \label{alg:place-holder}
    Initialization: $Q^{act}_i(0) = 0, \hat{q}_{V,i}(0) = 0$\;
    \For{$t = 0$ \KwTo $T$}{
      Obtain control decisions by solving problem \eqref{prob:lya}\;
      Update $Q^{act}_i(t)$ with \eqref{eq:Q_update}\;
      Udpate $\hat{q}_{V,i}(t)$ with \eqref{eq:place-holder}\;
      Update $Q_i(t)$ with \eqref{eq:Q_ph_update}\;
    }
\end{algorithm}

A prerequisite for this method is knowledge of the optimal Lagrange multiplier $q^{*}_{V,i}$. 
A common approach is to execute the algorithm without placeholders and estimate $q^{*}_{V,i}$ based on observed queue lengths. 
However, this approach is computationally expensive, 
as it requires rerunning the algorithm over a sufficiently long horizon whenever system parameters or $V$ change. 
Moreover, such reinitialization may be impractical in real-world systems that must operate continuously.

To address this limitation, we employ an \emph{exponential moving average} (EMA) to dynamically estimate $q^{*}_{V,i}$ 
online by tracking the evolution of $Q_i(t)$. 
The estimated value at slot $t$, denoted by $\hat{q}_{V,i}(t)$, is updated as
\begin{equation}
  \hat{q}_{V,i}(t) = (1-\alpha)\hat{q}_{V,i}(t-1) + \alpha Q_i(t),
  \label{eq:place-holder}
\end{equation}
where $\alpha \in (0,1)$ is a learning rate parameter controlling the tracking responsiveness. The corresponding placeholder is then defined as
\begin{equation*}
q_i(t) = \max\left[ \hat{q}_{V,i}(t) - r \log^2(V),\, 0 \right],
\end{equation*}
where $r$ is a tuning parameter to adjust the placeholder length.

The overall procedure is summarized in Algorithm~\ref{alg:place-holder}. 
Initially, both $Q^{\mathrm{act}}_i(t)$ and $\hat{q}_{V,i}(t)$ are set to zero. 
At each time slot, control decisions are obtained using the algorithm described in Section~\ref{section:algorithm design}. 
Subsequently, the actual queue length $Q^{\mathrm{act}}_i(t)$ is updated according to~\eqref{eq:Q_update}, ensuring nonnegativity. 
The estimated multiplier $\hat{q}_{V,i}(t)$ is then updated via~\eqref{eq:place-holder}, 
followed by the computation of the placeholder-adjusted queue length using~\eqref{eq:Q_ph_update}.

\section{Simulation Results} \label{section:simulation}
In this section, we evaluate the performance of the proposed algorithm and compare it with several benchmark algorithms under a range of system settings. 
To facilitate reproducibility and enable further research based on this work, all source code has been made publicly available.


\begin{table}[!t]
\renewcommand{\arraystretch}{1.2}
\caption{Simulation Parameters}
\label{table:parameter}
\centering
\begin{tabularx}{0.82\linewidth}{l | c}
\hline
\textbf{Parameter} & \textbf{Value}\\
\hline
Spectrum bandwidth $B$ & $0.1$~MHz \\
Task arrival rate $A_i(t)$ & $\mathcal{U}[1,2]$~kb \\
Energy harvesting efficiency $\mu_i$ & $0.51$ \\
Energy efficiency of WDs' processors $\kappa_i$ & $10^{-28}$ \\
CPU cycles for one bit of data $\phi_i$ & $10^3$~cycles/bit \\
Noise power at APs $\sigma_j^2$ & $10^{-9}$~W \\
Communication overhead $v_i$ & $1.1$ \\
Energy efficiency of APs' processors $\eta$ & $10^{-9}$~J/cycle \\
Maximum CPU frequency of WDs $f^{max}_i$ & $0.5$~GHz \\
Maximum WPT power $P^{T,max}_j$ & $3$~W \\
Maximum offloading power $P^{O,max}_i$ & $0.1$~W \\
Virtual battery capacity $B^{max}_i$ & $2\times 10^{-3}$~J \\
\hline
\end{tabularx}
\end{table}

\subsection{Simulation Setup}
We consider a WPMEC network consisting of $N=30$ WDs and $M=5$ APs deployed within a $10~\text{m} \times 10~\text{m}$ square area. 
The WDs are randomly distributed, whereas the APs are placed evenly so as to minimize the maximum distance from any point in the region, 
thereby avoiding scenarios in which some WDs are excessively distant from all APs. 

Following \cite{zhu2020computation}, we adopt a simplified Rayleigh fading channel model. 
The uplink channel gain from WD $i$ to AP $j$ is given by $h^U_{ij} = \theta^U d_{ij}^{-2} \lvert \bar{h}_{ij} \rvert^2$,
where $\theta^U = 5 \times 10^{-4}$, $d_{ij}$ denotes the distance between WD $i$ and AP $j$ (in meters), 
and $\bar{h}_{ij}$ is a complex random variable drawn from the standard complex normal distribution $\mathcal{CN}(0,1)$. 
The downlink channel gain is modeled in a similar manner, with $\theta^D = 10^{-3}$.

Unless otherwise specified, each simulation is conducted over $T = 10^4$ time slots, 
with a slot duration of $\Delta t = 10~\text{ms}$. 
The scaling coefficients associated with $Q_i(t)$ and $B_i^{-}(t)$ are set to $\beta_Q = 3 \times 10^{-7}$ and $\beta_B = 10^{10}$, respectively. 
To demonstrate the sustainability of the proposed algorithm, all WDs are assumed to have empty batteries at the beginning of the simulation. 
The learning rate for $\hat{q}_{V,i}(t)$ is chosen as $\alpha = 3 \times 10^{-4}$, and the tuning parameter for the placeholder length is set to $r = 50$. 
The remaining parameter values are summarized in Table~\ref{table:parameter}.

In the following, we evaluate the proposed algorithm with and without placeholders, and compare its performance with the following three benchmark schemes:
\begin{itemize}
    \item \emph{Local Computing Only (LCO):} all computation tasks generated by the WDs are processed locally;
    \item \emph{Fully Offloading (FO):} all computation tasks are entirely offloaded to the APs;
    \item \emph{Myopic:} each WD attempts to process as much data as possible in each time slot using its remaining energy.
\end{itemize}

\subsection{Queue Dynamics}
\begin{figure}[t]
    \centering
    \includegraphics[width=0.4\textwidth]{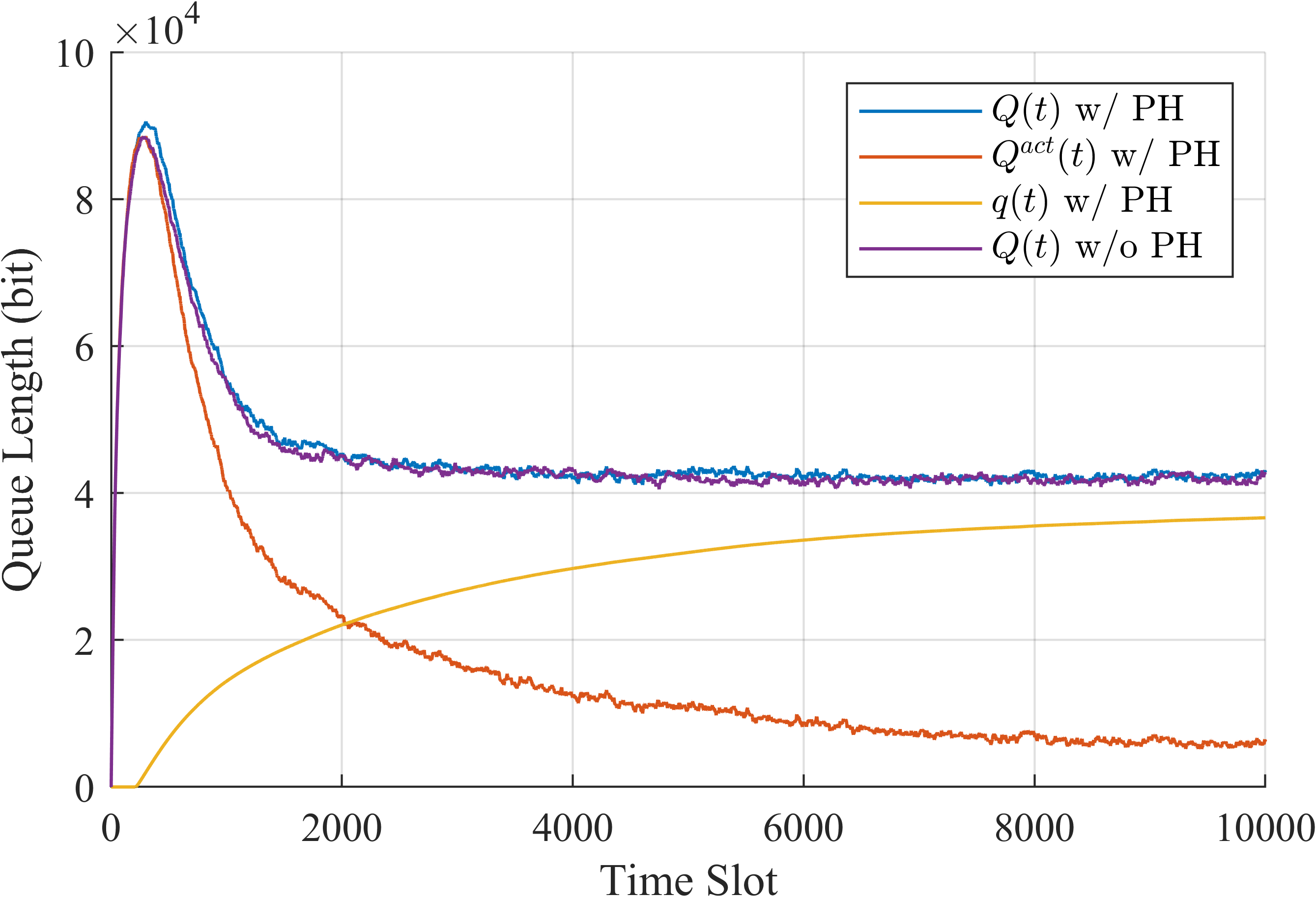}
    \caption{Dynamics of average data queue length.}
    \label{fig:Q_dynamics}
\end{figure}

\begin{figure}[t]
    \centering
    \includegraphics[width=0.4\textwidth]{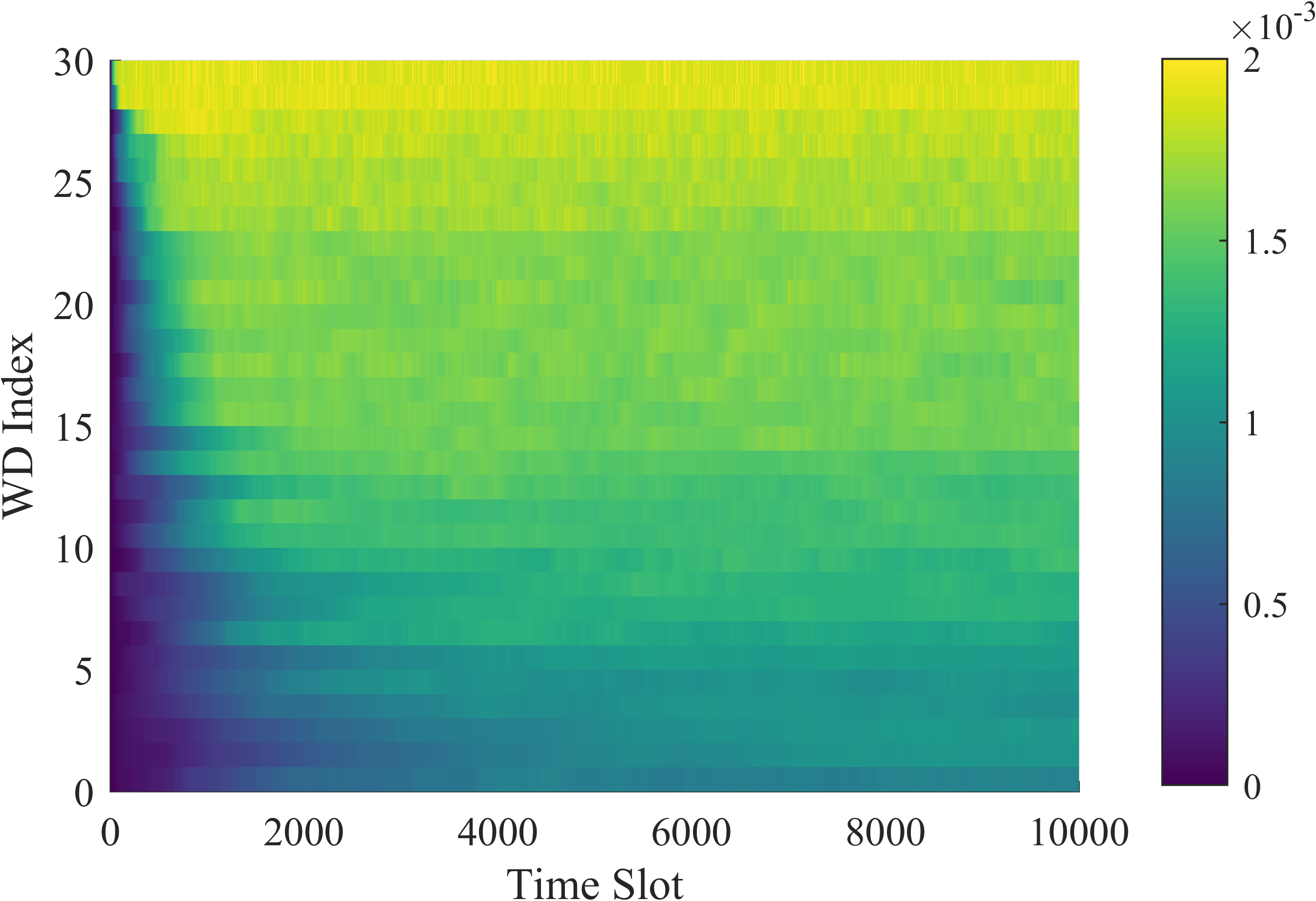}
    \caption{Dynamics of remaining energy at each WD.}
    \label{fig:B_heatmap}
\end{figure}

To illustrate the benefits of introducing placeholders, Fig.~\ref{fig:Q_dynamics} depicts the 
average queue length of all WDs with and without placeholders, denoted by w/ PH and w/o PH, respectively. 
Recall that at the initial stage, the batteries of all WDs are empty, hence 
minimizing $B_i^{-}(t)$ dominates the objective and leads to a rapid growth of the data queues. 
As time progresses, $B_i^{-}(t)$ gradually decreases while $Q_i(t)$ continues to grow, 
causing the reduction of $Q_i(t)$ to become increasingly significant. 
Consequently, $Q_i(t)$ begins to decrease and eventually stabilizes around a steady-state level.

Notably, the queue length $Q(t)$ used for decision making remains unchanged when placeholder bits are introduced. 
As a result, adding placeholders does not affect the per-slot control decisions, hence the corresponding energy consumption remains identical. 
Indeed, we also evaluated the remaining energy of WDs over time and observed no difference between the cases with and without placeholders.
By examining the evolution of the placeholder length $q(t)$, we observe that it remains zero during the initial period. 
This is because $\hat{q}_{V,i}(t)$ is smaller than $r \log^2(V)$ at the beginning, resulting in $q_i(t)=0$. 
Subsequently, the placeholder length $q(t)$ gradually increases and converges toward $Q(t) - r\log^2(V)$. 
As a consequence, the actual queue length $Q^{\mathrm{act}}(t)$ with placeholders is significantly smaller than the actual queue length $Q(t)$ without placeholders,
which leads to a substantial reduction in average latency.
In the following, placeholders are enabled by default in the proposed algorithm.

In addition to data queues, Fig.~\ref{fig:B_heatmap} illustrates the dynamics of the energy queues, 
i.e., the remaining energy levels of the WDs, sorted according to their average energy over time. 
Owing to the rapid attenuation of RF energy with propagation distance, the energy levels among different WDs exhibit substantial heterogeneity. 
Our simulations indicate that the overall energy consumption for WPT is primarily dominated by WDs located far from all APs. 
This observation suggests that, in relatively large deployment areas, 
it is crucial to deploy multiple APs and optimize their locations so as to minimize the maximum distance between any WD and its nearest AP.

\subsection{Impact of $V$}
\begin{figure}[t]
    \centering
    \includegraphics[width=0.4\textwidth]{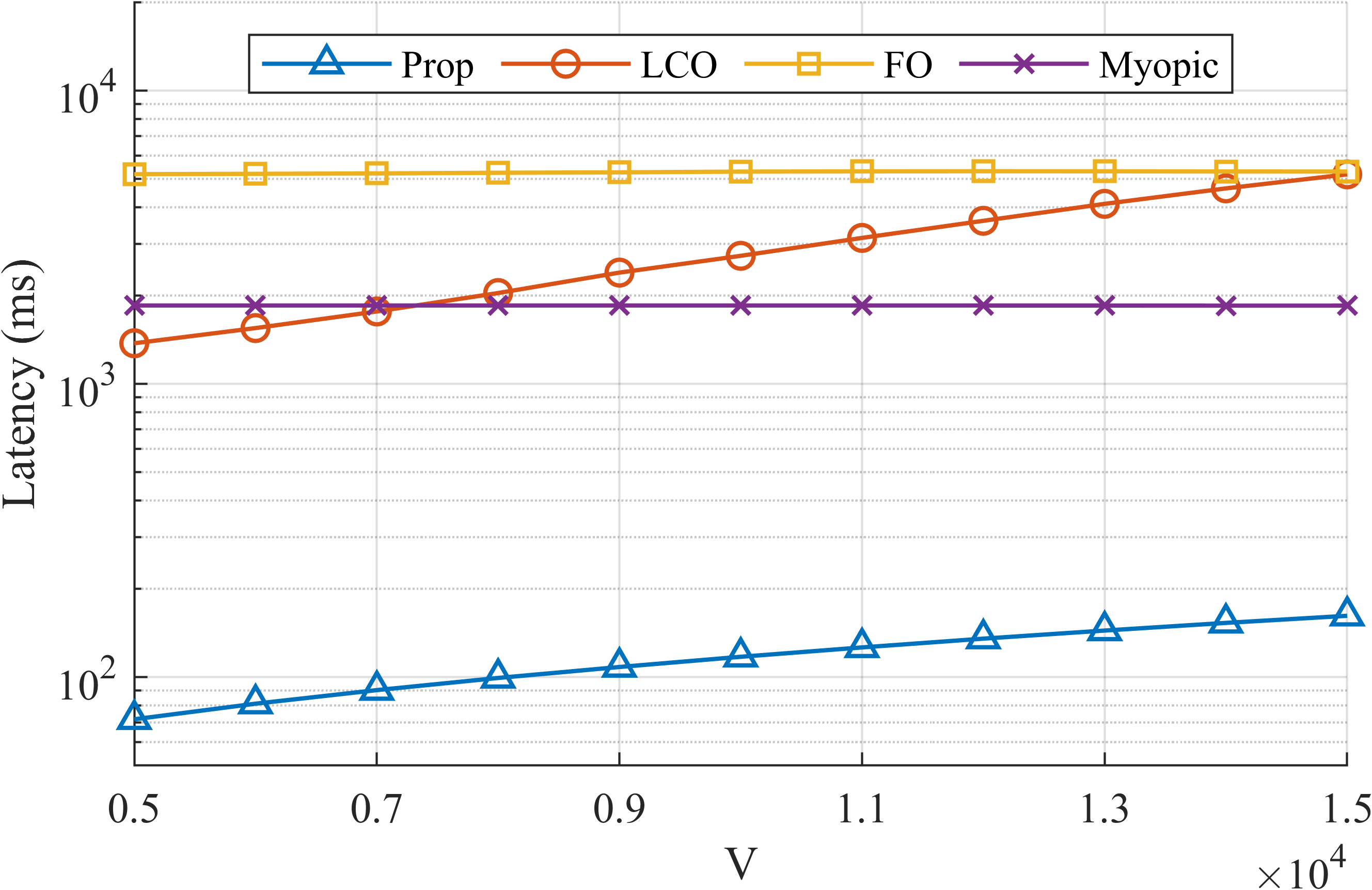}
    \caption{Average latency under different V.}
    \label{fig:V_latency}
\end{figure}

\begin{figure}[t]
    \centering
    \includegraphics[width=0.4\textwidth]{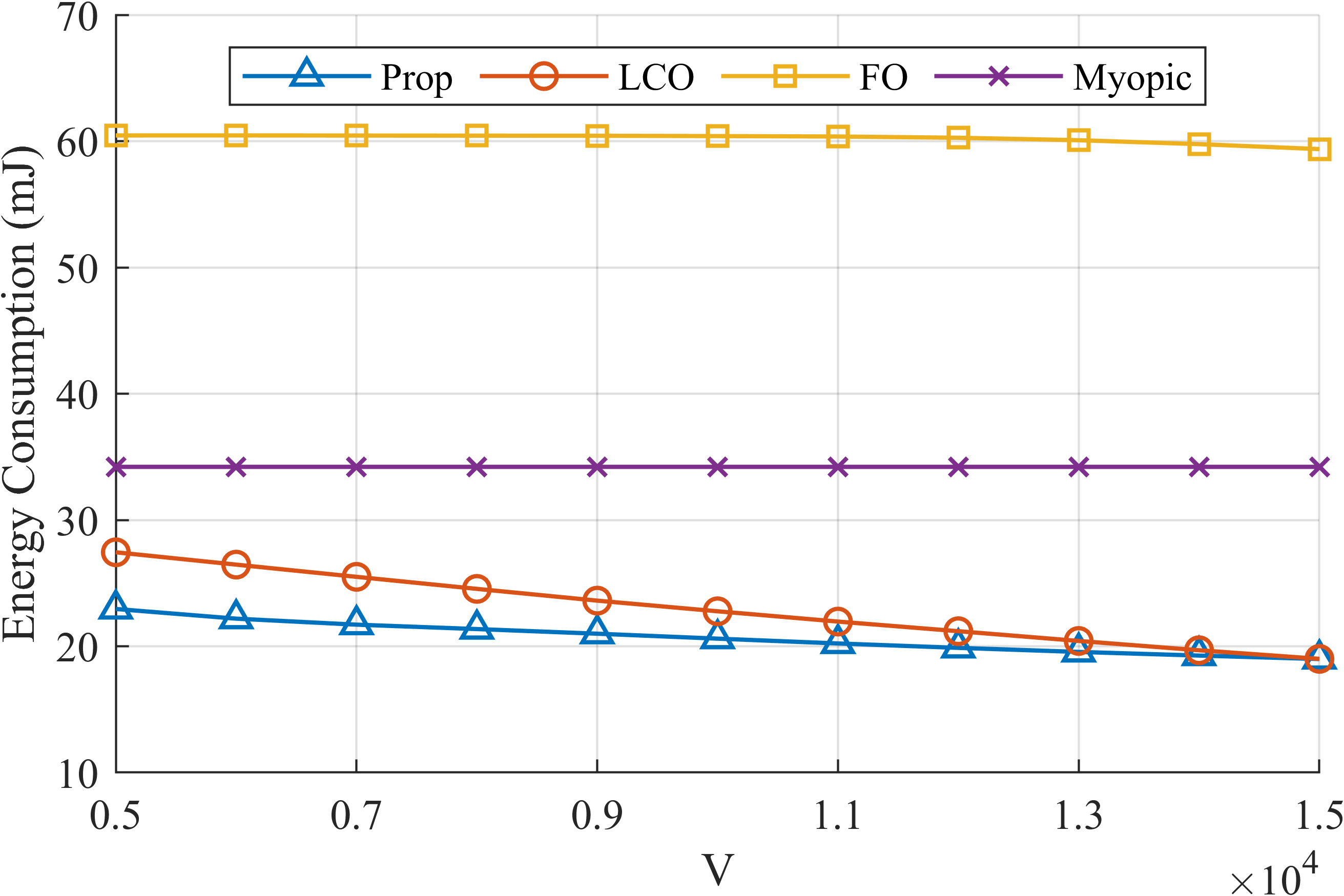}
    \caption{Energy consumption under different V.}
    \label{fig:V_energy}
\end{figure}

The control parameter $V$ plays a critical role in regulating the trade-off between energy consumption and average latency. 
To illustrate its impact, Fig.~\ref{fig:V_latency} and Fig.~\ref{fig:V_energy} present the average latency and energy consumption
of all considered algorithms under different values of $V$. 
Since the original data arrival rate is excessively high for the benchmark schemes, the arrival rate is reduced by $25\%$ in this and following set of experiments. 
To mitigate the effect of randomness, each data point shown in the figures is obtained by averaging the results of ten independent runs with different random seeds.

Among the four algorithms, the FO scheme exhibits the poorest performance in terms of both latency and energy consumption. 
This behavior is primarily attributed to the severe degradation of channel gains experienced by WDs located far from the APs. 
As a result, these WDs harvest substantially less energy while simultaneously expending more energy to offload their data. 
This imbalance not only causes persistent data accumulation at the WDs but also forces the APs to continuously perform WPT to recharge these distant devices. 
Furthermore, the data offloaded to the APs must be processed at the edge servers, which incurs an additional amount of energy consumption, 
thereby further degrading the overall efficiency of the FO scheme.

The Myopic algorithm aims to process as many data bits as possible in each time slot. 
Consequently, the remaining energy at each WD is maintained at a consistently low level, 
which in turn compels the APs to perform WPT in nearly every slot. 
Under such operating conditions, varying the value of $V$ does not affect the algorithm’s decisions in 
either the energy-harvesting or computation-scheduling stages. 
As a result, the performance of the Myopic scheme remains essentially unchanged across different values of $V$. 
It is worth noting that, although the Myopic algorithm exploits both local computing and computation offloading, 
its performance is still significantly inferior to that of the proposed algorithm. 
This observation highlights the advantage of an online optimization framework that accounts for long-term system dynamics over purely myopic decision making.

A clear trade-off between energy consumption and latency can be observed for both the LCO and Prop scheme.
Specifically, as $V$ increases, the average latency increases, while the corresponding energy consumption decreases. 
Nevertheless, across all considered values of $V$, the proposed algorithm consistently achieves an average 
latency that is orders of magnitude lower than those of the benchmark schemes, 
while simultaneously attaining the lowest energy consumption. 
These results demonstrate the benefits of jointly leveraging local computing and computation offloading, 
as well as exploiting coordinated planning across multiple time slots.

\subsection{Impact of Arrival Rate}
\begin{figure}[t]
    \centering
    \includegraphics[width=0.4\textwidth]{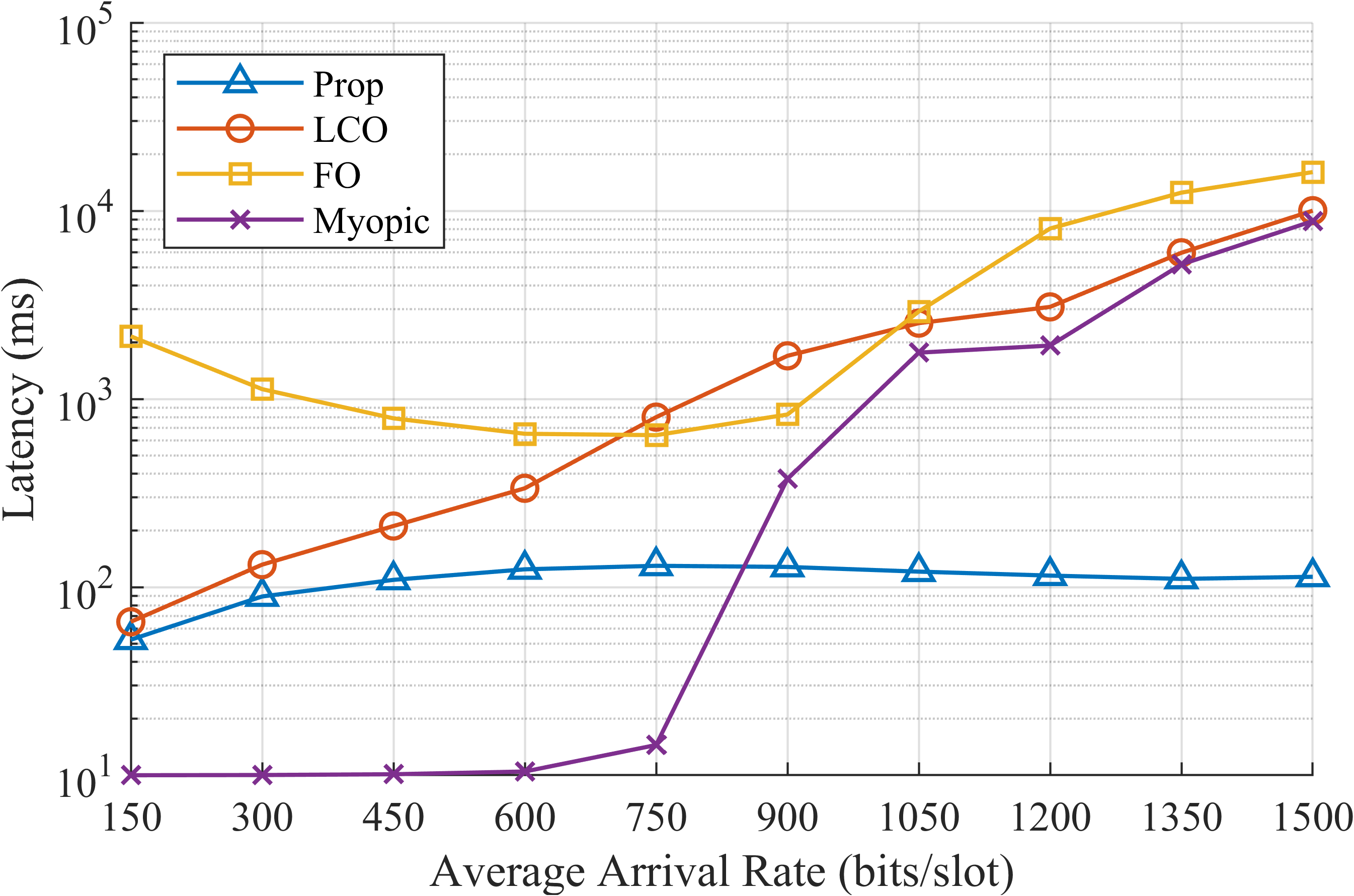}
    \caption{Average latency under different arrival rate.}
    \label{fig:arrival_rate_latency}
\end{figure}

\begin{figure}[t]
    \centering
    \includegraphics[width=0.4\textwidth]{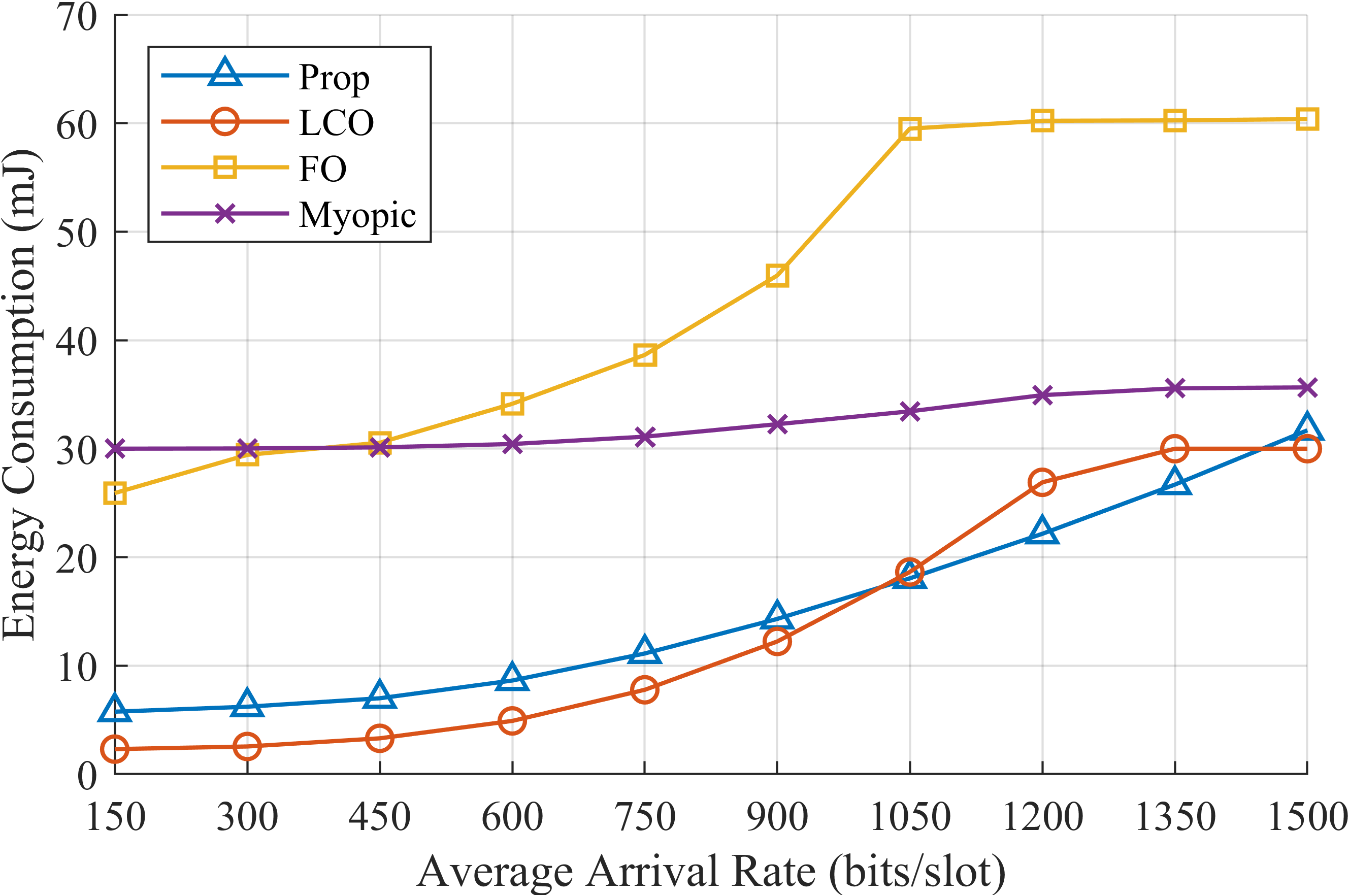}
    \caption{Energy consumption under different arrival rate.}
    \label{fig:arrival_rate_energy}
\end{figure}

To evaluate the performance of all algorithms under different workload conditions, Fig.~\ref{fig:arrival_rate_latency} and 
Fig.~\ref{fig:arrival_rate_energy} illustrate the average latency and energy consumption under various data arrival rate. 
When the workload is very low, the Myopic algorithm is able to process all arriving data within a single time slot, 
resulting in a latency of $10$~ms, which corresponds to the slot duration. 
However, this strategy maintains the remaining energy at the WDs at a consistently low level, thereby forcing the APs to perform WPT in every slot. 
Consequently, the average energy consumption reaches approximately $30$~mJ per slot, corresponding to the WPT power consumption of $3$~W. 
As the arrival rate increases further, the inefficiency of the Myopic scheme becomes apparent: 
the available harvested energy is no longer sufficient to process all incoming data, leading to a rapid increase in latency.

For the FO scheme, recall that the objective function in problem~\eqref{obj:offloading} includes the term $\sum_{i=1}^N Q_i(t) D_i^{O}(t)$. 
As a result, the algorithm increases the offloaded data volume only when $Q_i(t)$ becomes sufficiently large, 
effectively attempting to stabilize the queue length around a certain operating point. 
This explains the observed decrease in latency when the arrival rate increases from a small value,
as a higher arrival rate implies a shorter average waiting time per task under fixed queue length.
However, once the arrival rate exceeds the system capacity, the queue length can no longer be stabilized, and the latency grows rapidly.

When the arrival rate is low, local computing exhibits high energy efficiency. 
Under this condition, the LCO scheme achieves latency and energy consumption comparable to those of the proposed algorithm. 
As the arrival rate increases, however, the energy efficiency of local computing deteriorates rapidly due to the required increase in CPU frequency. 
In particular, for WDs located far from the APs, the harvested energy becomes insufficient to process all arriving data, 
leading to a sharp increase in average latency. 
Nevertheless, since LCO does not offload data to the APs, its maximum energy consumption corresponds to 
performing WPT in every slot, which is bounded by $30$~mJ per slot.

For the Prop algorithm, when the arrival rate is small, the queue lengths are insufficiently large for the placeholder mechanism to take effect. 
As a result, the latency increases slowly with the arrival rate in this regime. 
As the workload continues to grow, the placeholder mechanism maintains the effective queue length around a steady value. 
Consequently, the latency of the proposed algorithm remains relatively stable, while the energy consumption increases with the growing workload.

\subsection{Impact of Network Scale}
\begin{figure}[t]
    \centering
    \includegraphics[width=0.4\textwidth]{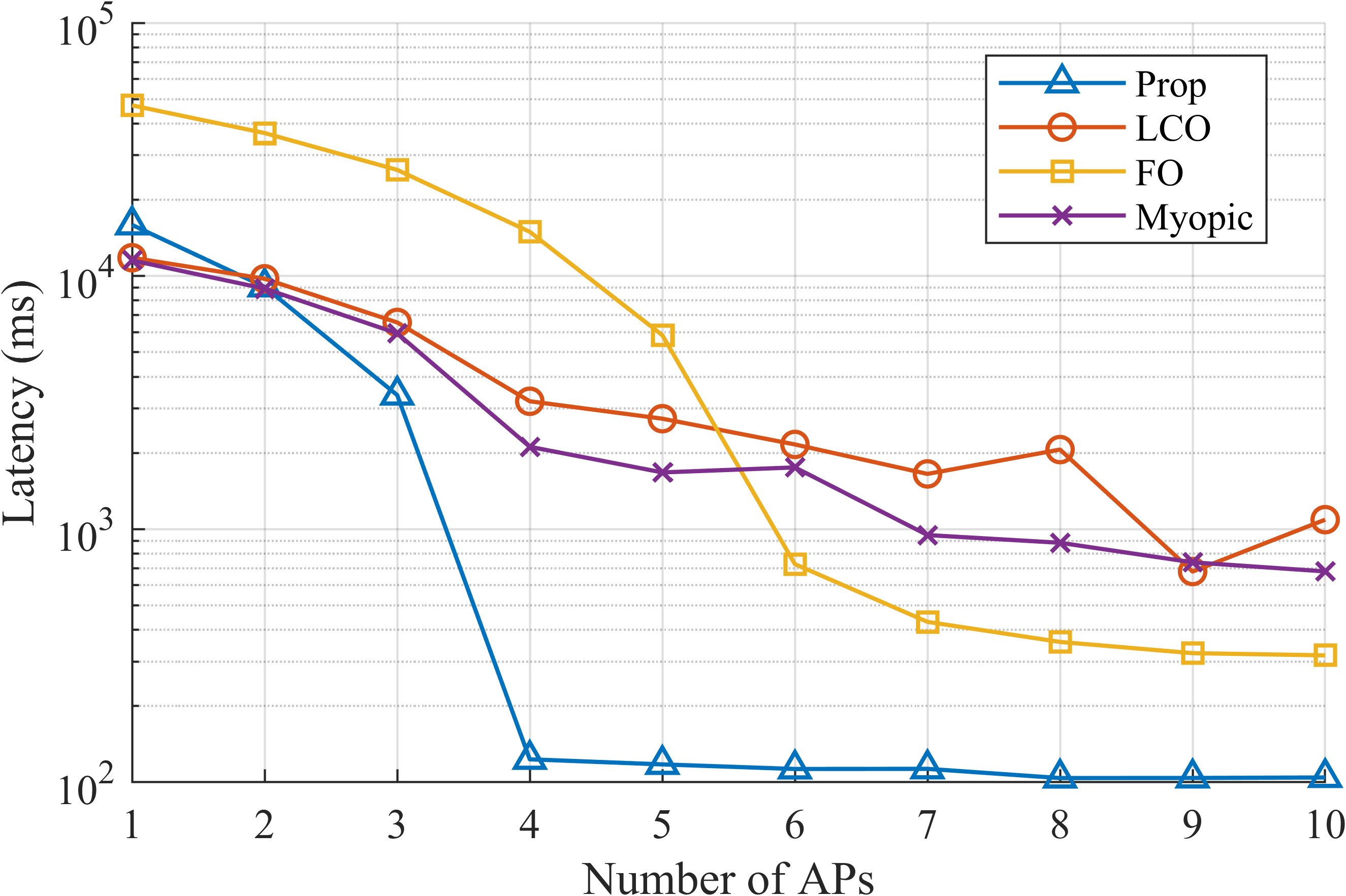}
    \caption{Average latency under different numbers of APs.}
    \label{fig:M_latency}
\end{figure}

\begin{figure}[t]
    \centering
    \includegraphics[width=0.4\textwidth]{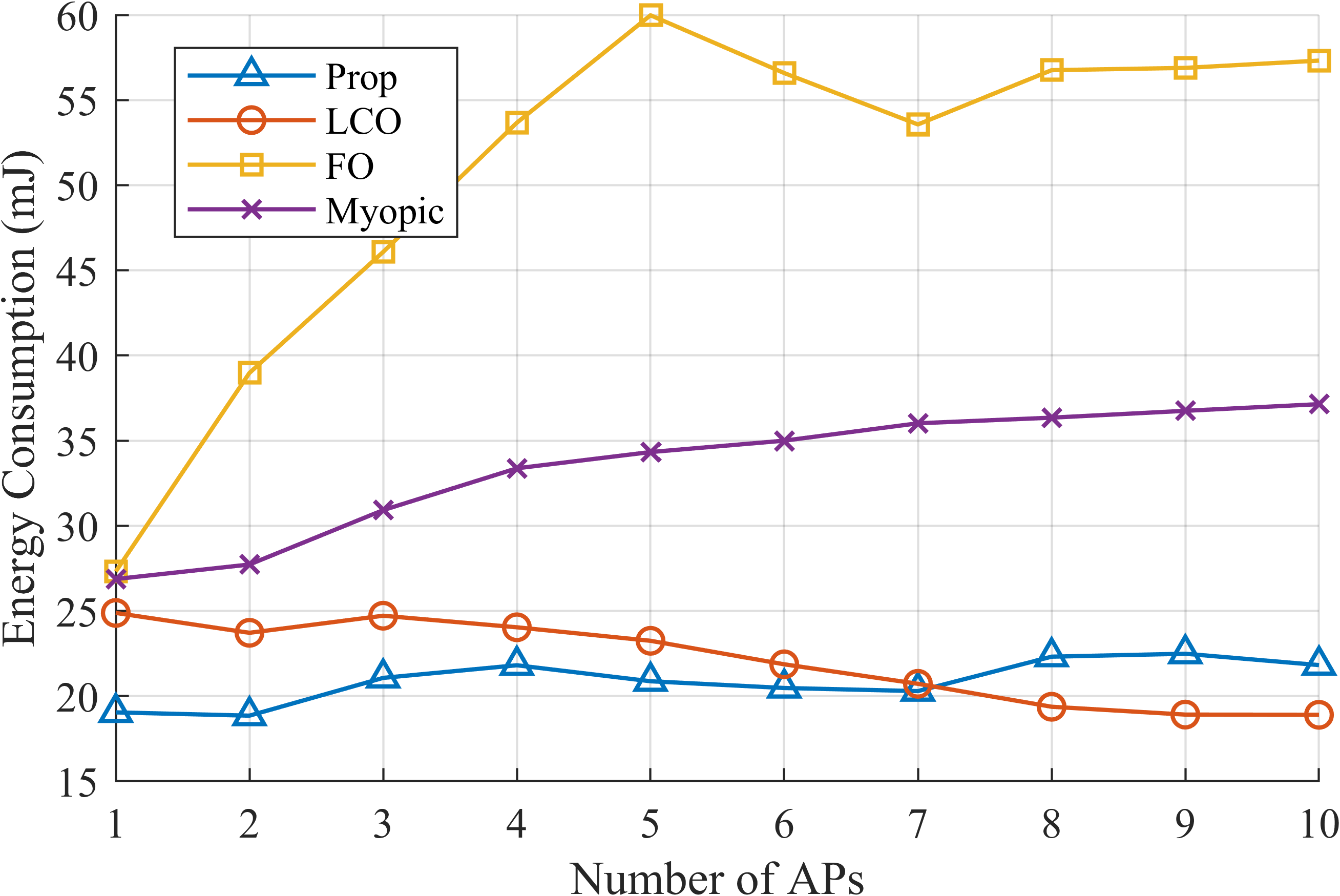}
    \caption{Energy consumption under different numbers of APs.}
    \label{fig:M_energy}
\end{figure}

To examine the performance of all algorithms under different network scales, 
Fig.~\ref{fig:M_latency} and Fig.~\ref{fig:M_energy} depict the average latency and energy consumption under different numbers of APs. 
In all scenarios, the AP locations are chosen to minimize the maximum distance between any point in the considered area and its nearest AP. 
This deterministic placement may introduce certain local irregularities, such as the observed fluctuations in the latency 
of LCO and the energy consumption of FO in the latter part of the curves. 

When only a single AP is deployed, the AP’s time resources are predominantly allocated to WPT, while the WDs mainly rely on local computing to process their data. 
Therefore, the performances of Prop, LCO, and Myopic are similar. 
In contrast, the FO scheme must reserve a portion of the available time for data offloading, which leads to a significantly higher 
average latency compared to the other algorithms. 

As the number of APs increases, performance improvements arise from two main factors. 
First, the efficiency of WPT is enhanced due to reduced propagation distances. 
Second, the efficiency of computation offloading improves as more APs become available. 
Since LCO does not employ computation offloading, its performance gain is solely attributable to the first factor.
Consequently, the performance improvements of FO and Prop are more pronounced than those of LCO. 

\begin{figure}[t]
    \centering
    \includegraphics[width=0.4\textwidth]{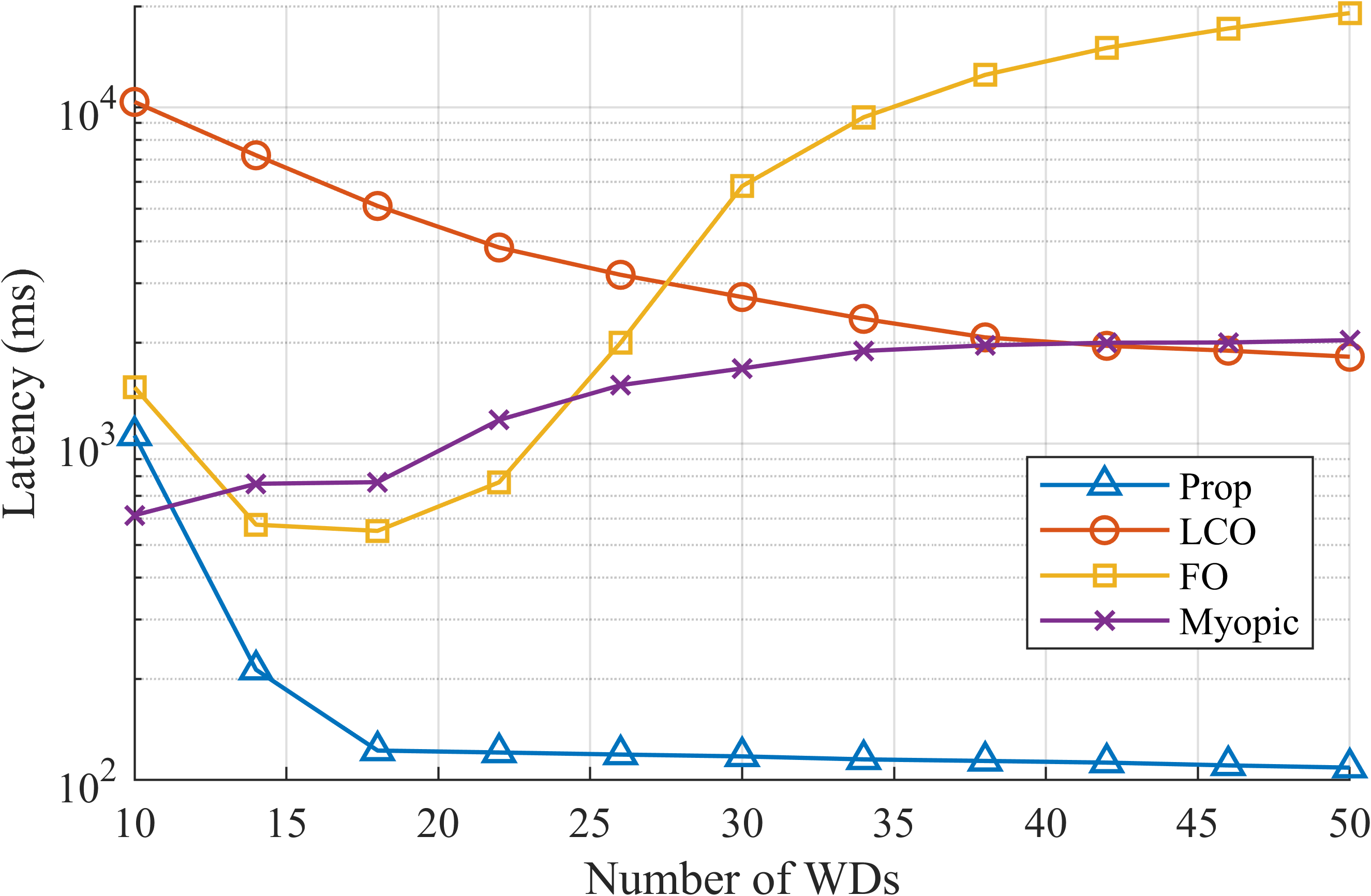}
    \caption{Average latency under different numbers of WDs.}
    \label{fig:N_latency}
\end{figure}

\begin{figure}[t]
    \centering
    \includegraphics[width=0.4\textwidth]{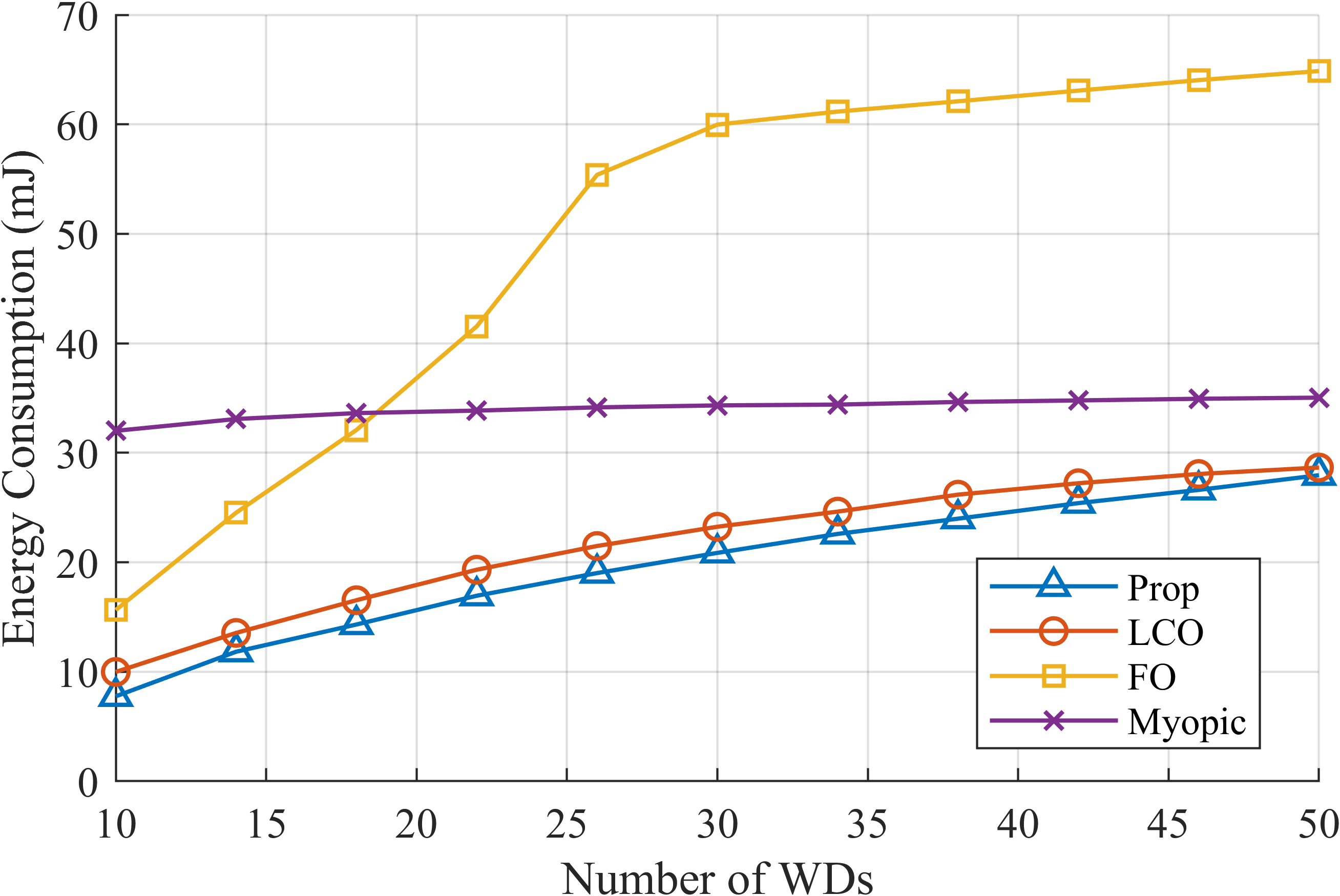}
    \caption{Energy consumption under different numbers of WDs.}
    \label{fig:N_energy}
\end{figure}

Fig.~\ref{fig:N_latency} and Fig.~\ref{fig:N_energy} illustrate the average latency and energy consumption of all algorithms under different numbers of WDs. 
When the number of WDs is small, the latency of Prop, LCO, and FO decreases as the number of WDs increases. 
This phenomenon is due to the same reason explained in the previous subsection:
these algorithms attempt to maintain the total workload in the system around a certain operating point. 
As the number of WDs increases, the aggregate data arrival rate also increases, which leads to reduced average queuing latency. 
However, for the FO scheme, once the number of WDs exceeds 18, the total workload surpasses the system’s processing capacity, 
resulting in a rapid increase in latency. 
In contrast to the above three algorithms, Myopic seeks to minimize the residual workload in each slot rather than regulating the workload around a steady level. 
Consequently, its latency increases monotonically with the number of WDs.

\section{Conclusion} \label{section:conclusion}
This paper studied the energy-efficient scheduling in WPMEC networks with multiple WDs and APs. 
An online optimization framework based on Lyapunov optimization theory was developed to jointly schedule WPT and computation offloading 
while accounting for energy storage and queue dynamics. 
To reduce computational complexity, a relax–then–adjust approach and an efficient per-slot algorithm were proposed. 
Theoretical analysis established explicit performance guarantees for the online algorithm. 
Simulation results demonstrated significant energy savings and latency reduction compared with benchmark schemes.

\bibliographystyle{IEEEtran}
\bibliography{ref}

%
%
%
%
%
%


\end{document}